\newtheorem{definition}{Definition}
\newtheorem{lemma}{Lemma}
\newtheorem{theorem}{Theorem}
\newcommand{\be}{\begin{equation}}
\newcommand{\ee}{\end{equation}}
\newcommand{\bbR}{\mathbb{R}}
\newcommand{\0}{{\mathbf{0}}}
\newcommand{\e}{{\mathbf{e}}}
\renewcommand{\u}{{\mathbf{u}}}
\renewcommand{\v}{{\mathbf{v}}}
\newcommand{\x}{{\mathbf{x}}}
\newcommand{\y}{{\mathbf{y}}}
\newcommand{\z}{{\mathbf{z}}}
\newcommand{\I}{{\mathbf{I}}}
\newcommand{\supp}{{\rm supp}}
\newcommand{\tx}{{\mathbf{\tilde{x}}}}
\newcommand{\tS}{{\tilde{S}}}
\newcommand{\bPhi}{\mathbf{\Phi}}
\begin{document}
\title{Subspace Thresholding Pursuit: A Reconstruction Algorithm for Compressed Sensing
}
\author{Chao-Bing Song, Shu-Tao Xia, Xin-Ji Liu}
\date{}
\maketitle
\begin{abstract}
We propose a new iterative greedy algorithm for reconstructions of sparse signals with or without noisy perturbations in compressed sensing.
The proposed algorithm, called \emph{subspace thresholding pursuit} (STP) in this paper, is a simple combination of subspace pursuit and iterative hard thresholding.
Firstly, STP  has the theoretical guarantee comparable to that of $\ell_1$ minimization in terms of restricted isometry property. Secondly, with a tuned parameter, on the one hand, when reconstructing Gaussian signals, it can outperform other state-of-the-art reconstruction algorithms greatly; on the other hand, when reconstructing constant amplitude signals with random signs, it can outperform other  state-of-the-art iterative greedy algorithms and even outperform $\ell_1$ minimization if the undersampling ratio is not very large. In addition, we propose a simple but effective method
to improve the empirical performance further if the undersampling ratio is large. Finally, it is showed that other iterative greedy algorithms can  improve their empirical performance by borrowing the idea of STP.

\end{abstract}
\begin{keywords}
Compressed sensing, restricted isometry constants, reconstruction algorithms, subspace thresholding pursuit, sparse recovery
\end{keywords}
\renewcommand{\thefootnote}{\fnsymbol{footnote}} \footnotetext[0]{
This research is supported in part by the Major State Basic Research
Development Program of China (973 Program, 2012CB315803), the National
Natural Science Foundation of China (61371078), and the Research Fund for the Doctoral Program of Higher Education of China (20130002110051)
.

All the authors are with the Graduate School at ShenZhen, Tsinghua University, Shenzhen, Guangdong 518055, P.R. China (e-mail: scb12@mails.tsinghua.edu.cn,  xiast@sz.tsinghua.edu.cn, liuxj11@mails.tsinghua.edu.cn).} \renewcommand{\thefootnote}{\arabic{footnote}} \setcounter{footnote}{0}
 \section{\label{sec:Introduction}Introduction}

As a new paradigm for signal sampling, compressed sensing (CS) \cite{donoho2006compressed,candes2005decoding,candes2006robust}
has attracted a lot of attention in recent years. Consider an $s$-sparse signal
$\x=(x_1,x_2,\ldots,x_N)^T\in \bbR^N$ which has at most $s$ nonzero entries. Let
$\bPhi\in\bbR^{m\times N}$ be a measurement matrix with $m\ll N$ and $\y=\bPhi\x$ be
a measurement vector. CS deals with recovering the original signal $\x$ from
the measurement vector $\y$ by finding the sparsest solution to the underdetermined
linear system $\y=\bPhi\x$, i.e., solving the following \emph{$\ell_0$ minimization}
problem:
\be
\min \|\x\|_0\qquad s.t.\qquad \bPhi\x=\y, \label{eq:l_0}
\ee
where $\|\x\|_0:=|\{i:x_i\neq 0\}|$ denotes the $\ell_0$ quasi norm of $\x$.
Unfortunately, as a typical combinatorial optimization problem, the above $\ell_0$ minimization
is NP-hard \cite{candes2005decoding}. One popular strategy is to relax the $\ell_0$ minimization
problem to an \emph{$\ell_1$ minimization} problem:
\be
\min \|\x\|_1\qquad s.t.\qquad \bPhi\x=\y. \label{eq:l_1}
\ee
Due to the convex essence of $\ell_1$ minimization, we can solve it in polynomial time \cite{candes2005decoding}. However, its computational complexity is $O(m^2 N^{3/2})$ when interior point methods are employed \cite{nesterov1994interior}, which is too high for practical use.

Compared with $\ell_1$ minimization, the family of iterative greedy algorithms can reduce the computational complexity greatly, possess a similar empirical performance and have the theoretical reconstruction guarantee by the so-called \emph{restricted isometry property} (RIP). As powerful alternatives to $\ell_1$ minimization, a lot of iterative greedy algorithms  have been proposed and analyzed. According to the way of greedily selecting the columns of measurement matrix, we can divide current iterative greedy algorithms into two kinds: 1) variants of orthogonal matching pursuit (OMP) \cite{pati1993orthogonal} called OMP-like algorithms, such as OMP itself,
regularized OMP \cite{needell2010signal},  compressive sampling matching pursuit (CoSaMP) \cite{needell2009cosamp}, subspace pursuit (SP) \cite{dai2009subspace},
generalized OMP (GOMP)\cite{shim2012generalized} or orthogonal multi matching pursuit (OMMP) \cite{liu2012orthogonal}, sparsity adaptive matching pursuit (SAMP)\cite{do2008sparsity},  forward backward pursuit (FBP)\cite{karahanoglu2013compressed}; 2) variants of iterative hard thresholding (IHT) \cite{Blumensath2009} called IHT-like algorithms, such as IHT itself, gradient descent with sparsification (GDS) \cite{garg2009gradient},  hard thresholding pursuit (HTP) \cite{foucart2011hard}, normalized iterative hard thresholding (NIHT) \cite{blumensath2010normalized}. In all of these algorithms, we choose SP and CoSaMP as  repesentatives for OMP-like algorithms and HTP and NIHT as representatives for the IHT-like algorithms. They have provable  theoretical guarantees comparable to that of $\ell_1$ minimization and good empirical performance to reconstruct constant amplitude signals with random signs  (CARS signals) when compared with other iterative greedy algorithms.

In the view of theoretical guarantees, one of the most widely known conditions is the restricted isometry property (RIP) \cite{candes2005decoding} as follows.
\begin{definition}[\cite{candes2005decoding}]
\label{def:rip}
The measurement matrix $\mathbf{\Phi}\in\mathbb{R}^{m\times N}$ is said to satisfy the $s$-order RIP if for any $s$-sparse ($\|\x\|_0\le s$) signal $\mathbf{x}\in\mathbb{R}^{N}$
\begin{equation}
(1-\delta)\Vert\mathbf{x}\Vert_2^2\le\Vert\mathbf{\Phi}\mathbf{x}\Vert_2^2\le(1+\delta)\Vert\mathbf{x}\Vert_2^2, \label{eq:origin_def}
\end{equation}
where $0\le\delta\le1$. The infimum of $\delta$, denoted by $\delta_s$, is called the \emph{restricted isometry constant (RIC)} of $\mathbf{\Phi}$.
\end{definition}

Table \ref{t1} shows the sufficient conditions with respect to RICs $\delta$ with some orders for OMP, IHT and the four representatives above to perfectly reconstruct s-sparse signals.
\begin{table*}[t]
\caption{Sufficient conditions of iterative greedy algorithms to perfectly reconstruct s-sparse signals}\label{t1}
  \centering
  \begin{tabular}{|c|c|c|c|c|c|}
  \hline
  OMP \cite{mo2012remark}, \cite{wang2012recovery}& SP \cite{song2013improved} & CoSaMP \cite{song2013improved} & IHT/HTP \cite{foucart2011hard}  &NIHT \cite{blanchard2014greedy} \\
  \hline
  $\delta_{s+1}<\frac{1}{\sqrt{s}+1}$ & $\delta_{3s}<0.4859$ & $\delta_{4s}<0.5$ & $\delta_{3s}<\frac{1}{\sqrt{3}}\approx 0.5773$  & $\delta_{3s}<0.2$\\
  \hline
  \end{tabular}
\end{table*}

In the view of empirical performance,
all iterative greedy algorithms have good performance when reconstructing Gaussian signals, while relatively bad performance when reconstructing CARS signals. In \cite{maleki2010optimally}, Maleki and Donoho showed that CARS signals may be the most difficult kind of signals that iterative greedy algorithms can reconstruct. It is noteworthy that although a lot of iterative greedy algorithms can outperform $\ell_1$ minimization when reconstructing Gaussian signals, to the best of our knowledge, there is no existing iterative greedy algorithm that can outperform $\ell_1$ minimization when reconstructing CARS signals.

In this paper, our main contribution is a new algorithm, termed \emph{subspace thresholding pursuit} (STP). By finding that the idea of IHT-like algorithms can improve the approximation effect of OMP-like algorithms efficiently, we combine the steps of SP and IHT in one iteration, thus acquiring a better empirical performance. It is very convenient to  analyze  STP  theoretically since the theoretical guarantees of SP and IHT are established well and STP is only a simple combination of them.
Compared with the existing iterative greedy algorithms, with a tuned parameter $\mu$, the empirical performance of STP can be better obviously when reconstructing both Gaussian signals and CARS signals. Compared with $\ell_1$ minimization, if the undersampling ration (i.e., $m/N$) is not very large, the  empirical performance of STP  can be much better  when reconstructing Gaussian signals and can be slightly better when reconstructing CARS signals. To the best of our knowledge, it is the first iterative greedy algorithm that can  outperform the reconstruction capability of $\ell_1$ minimization in the CARS signal case. In addition, we propose a simple but effective method
to improve the empirical performance further if the undersampling ratio is large. Furthermore we
 generalize the idea of STP to other state-of-the-art iterative greedy algorithms and the resulting algorithms show better empirical performance than the original ones.

Notations: Let $\mathbf{x}\in\mathbb{R}^N$. Let $T\subseteq \{1,2,\ldots,N\}$, and $|T|$ and $\overline{T}$ respectively denote the cardinality and complement of $T$. Let $\mathbf{x}_T\in\mathbb{R}^N$ denote the vector obtained from $\mathbf{x}$ by keeping the $|T|$ entries in $T$ and setting all other entries to zero.  Let $\text{supp}(\mathbf{x})$ denote the support of $\mathbf{x}$ or the set of indices of nonzero entries in $\mathbf{x}$. Note that $\mathbf{x}$ is $s$-sparse if and only if $|\text{supp}(\mathbf{x})|\le s$. For a matrix $\mathbf{\Phi}\in\mathbb{R}^{m\times N}$, let $\mathbf{\Phi}^{\! *}$ denote the  transpose of $\mathbf{\Phi}$ and $\mathbf{\Phi}_T$ denote the submatrix that consists of columns of $\mathbf{\Phi}$ with indices in $T$. Let $\mathbf{I}$ denote the identity matrix whose dimension is decided by contexts. In addition, let $\tau=\frac{m}{N}$ denote the undersampling ratio.

Denote the general CS model:
\begin{equation}
\mathbf{y}=\mathbf{\Phi}\mathbf{x}+\mathbf{e}=\mathbf{\Phi}\mathbf{x}_S+\mathbf{\Phi}\mathbf{x}_{\overline{S}}+\mathbf{e}=\mathbf{\Phi}\mathbf{x}_S+\mathbf{e}^{\prime},
\label{eq:general_model}
\end{equation}
where $\mathbf{\Phi}\in\mathbb{R}^{m\times N}$ is a measurement matrix with $m\ll N$, $\mathbf{e}\in\mathbb{R}^m$
is an arbitrary noise, $\mathbf{y}\in\mathbb{R}^m$ is a low-dimensional observation, and $\mathbf{e}^{\prime}=\mathbf{\Phi}\mathbf{x}_{\overline{S}}+\mathbf{e}$
denotes the total perturbation by the sparsity defect $\mathbf{x}_{\overline{S}}$ and measurement error $\mathbf{e}$.

The remainder of the paper is organized as follows. Section \ref{sec:alg} gives the algorithm description and  theoretical analyses. Section \ref{sec:simulations} gives the performance simulations and analyses. Section \ref{sec:extended-study} gives some extended study. Finally, we conclude the paper in Section \ref{sec:conclusion}.


\section{\label{sec:alg}Algorithm Description and Theoretical Analyses}
In this section, we give the  description of the STP algorithm in subsection \ref{sub:alg-description}. Then the theoretical guarantee and its proof are given in subsection \ref{sub:guarantee}. Finally, we show the upper bound of the number of the STP's iterations. The proofs of the theoretical guarantee and the number of iterations can be found in appendix.
\subsection{\label{sub:alg-description}Algorithm Description}
Firstly, in order to give a clear comparison of SP, HTP and the proposed STP, the SP and HTP algorithms are summarized as follows.

\begin{algorithm}[H]
Input: $\y,\bPhi,s$.\\
Initialization: $S^{0}=\emptyset,\x^0=\0$.\\
Iteration: At the $n$-th iteration, go through the following steps.
\begin{enumerate}
\item $\Delta S =$ \{$s$ indices corresponding to the  $s$ largest magnitude entries in the vector $\bPhi^{\! *}\, (\y- \bPhi \x^{n-1})$\}.
\item $\tS^{n}=S^{n-1} \bigcup \Delta S$.
\item $\tx^{n}=\text{arg}\min_{\z \in\bbR^N}\{ \Vert\y-\bPhi\z\Vert_2,\;
    \text{supp}(\z)\subseteq \tS^{n}\}$.
\item $S^{n}=$\{$s$ indices corresponding to the $s$ largest magnitude elements of $\mathbf{\tilde{x}}^{n}$\}.
\item $\x^{n}=\text{arg}\min_{\z \in\bbR^N}\{ \Vert\y-\bPhi\z\Vert_2,\;
    \text{supp}(\z)\subseteq S^{n}\}$.
\end{enumerate}
until the stopping criteria is met. \\
Output: $\mathbf{x}^{n}$, \text{supp}($\mathbf{x}^{n}$).
\caption{Subspace Pursuit}\label{alg:sp}
\end{algorithm}

\begin{algorithm}[H]
Input: $\y,\bPhi,s$.\\
Initialization: $S^{0}=\emptyset,\x^0=\0$.\\
Iteration: At the $n$-th iteration, go through the following steps.
\begin{enumerate}
\item $S^{n}=$\{$s$ indices correspoding to the $s$ largest magnitude entries of $\x^{n-1}+\bPhi^{\! *}(\y-\bPhi\x^{n-1})$\}.
\item $\x^{n}=\text{arg}\min_{\z \in\bbR^N}\{ \Vert\y-\bPhi\z\Vert_2,\;
    \text{supp}(\z)\subseteq S^{n}\}$.
\end{enumerate}
until the stopping criteria is met. \\
Output: $\mathbf{x}^{n}$, \text{supp}($\mathbf{x}^{n}$).
\caption{Hard Thresholding Pursuit}\label{alg:htp}
\end{algorithm}

The main steps of STP  are summarized below.
\begin{algorithm}[H]
Input: $\y,\bPhi,s, \mu$.\\
Initialization: $S^{0}=\emptyset,\x^0=\0$.\\
Iteration: At the $n$-th iteration, go through the following steps.
\begin{enumerate}
\item $\Delta S =$ \{$s$ indices corresponding to the  $s$ largest magnitude entries in the vector $\bPhi^{\! *}\, (\y- \bPhi \x^{n-1})$\}.
\item $\tS^{n}=S^{n-1} \bigcup \Delta S$.
\item $\tx^{n}=\text{arg}\min_{\z \in\bbR^N}\{ \Vert\y-\bPhi\z\Vert_2,\;
    \text{supp}(\z)\subseteq \tS^{n}\}$.

\item $U^{n}=$\{$s$ indices corresponding to the $s$ largest magnitude elements of $\mathbf{\tilde{x}}^{n}$\}.
\item $\mathbf{u}^{n}=$ \{the vector from ${\mathbf{\tilde{x}}}^{n}$ that keeps the entries of ${\mathbf{\tilde{x}}}^{n}$ in $U^{n}$ and set all other ones to zero.\}
\item $S^{n}=$\{$s$ indices correspoding to the $s$ largest magnitude entries of $\u^{n}+\mu\bPhi^{\! *}(\y-\bPhi\u^{n})$\}.
\item $\x^{n}=\text{arg}\min_{\z \in\bbR^N}\{ \Vert\y-\bPhi\z\Vert_2,\;
    \text{supp}(\z)\subseteq S^{n}\}$.

\end{enumerate}
until the stopping criteria is met. \\
Output: $\mathbf{x}^{n}$, \text{supp}($\mathbf{x}^{n}$).
\caption{Subspace Thresholding Pursuit}\label{alg:stp}
\end{algorithm}

The STP algorithm is initialized with a trivial signal approximation $\x^0=\0$ and a trivial support estimate $S^0=\emptyset$. The parameters $\mu$ can be adjusted before the execution of STP. In each iteration, we call steps 1 and 2 ``OMP-like identification'' since they are common identification steps for all  OMP-like algorithms.
 Such identification steps select the set $\Delta S$ of the indices corresponding to the  one or several largest entries in $\bPhi^{\! *}\, (\y- \bPhi \x^{n-1})$  and then merge $\Delta S$ and the support estimate $S^{n-1}$. Then in step 3, STP solves a least squares problem to approximate the original signal $\x$ on the merged set $\tilde{S}^{n}$. In steps 4 and 5, STP employs a pruning stage by retaining only the $s$ largest entries in the least squares signal approximation $\tx$ to produce a new approximation $\u^{n}$. Step 6 is a common step for all  IHT-like algorithms which we call ``IHT-like identification''. In the IHT-like identification step,
 STP selects the set $S^n$ of indices corresponding to the $s$ largest entries in the vector $\u^{n}+\mu\bPhi^{\! *}(\y-\bPhi\u^{n})$. Finally, a least squares problem is solved again to get the final approximation $\x^{n}$ in the $n$-th iteration.

The stopping criteria of iterative greedy algorithms can be selected differently in implementation. One alternative is to use the stopping criteria according to the  property of the corresponding algorithm,
such as ``$n> s$'' of OMP, ``$\|\y-\bPhi\x^n\|_2\ge\|\y-\bPhi\x^{n-1}\|_2$'' of SP in \cite{dai2009subspace} or ``$S^{n-1}=S^{n}$'' of HTP in \cite{foucart2011hard},
or the stopping criteria that is independent from the algorithm itself can be used, which may be ``$n> n_{\max}$ or $\|\y-\bPhi\x^{n}\|_2<\varepsilon \|\y\|_2 $''.
If an algorithm is stable, i.e., as the iteration process continues, the series $\{\|\x^{n}-\x_S\|_2, n=1,2,3,\cdots\}$ will not diverge, such a criteria provides a tradeoff between accuracy and computational complexity.

OMP-like algorithms and IHT-like algorithms have big differences in the identification steps, but
the representatives SP and HTP of both kinds respectively have nearly the same empirical performance. The main characteristic of
the proposed STP algorithm is that we combine the ``OMP-like identification'' and ``IHT-like identification'' in one iteration,
so as to take full advantage of the virtue of both SP and HTP.
In the view of IHT-like algorithms, supp($\x$) can lie both in the support of $\u^n$ and $\x-\u^n$, where $\bPhi^{\! *}(\y-\bPhi\u^n)=\bPhi^{\! *}\bPhi(\x_S-\u^n)+\bPhi^{\! *}\e^{\prime}$ is a one-dimensional approximation of $\x-\u^n$ if the RIC is small ($\bPhi^{\! *}\bPhi\approx \I$ in this case).   But the pruning process is only taken in  $\tilde{S}^n$. Therefore, after the pruning stage in steps 4 and 5,
taking a IHT-like identification step may be a good way to give the support of $\x$ in $S\backslash U^n$ the opportunity to enter
into the final support estimate $S^n$, so as to get a better approximation effect. The weight parameter $\mu$ is selected according
to  experience. If $\mu$ is large, STP is prone to select the indices of $\supp(\x-\u^n)$; conversely, the indices of
$\supp(\u^n)$ is preferred to be selected.

\subsection{\label{sub:guarantee}The Theoretical Guarantee}
In this subsection, we establish the sufficient condition to guarantee STP to converge in the realistic case, i.e., considering the general CS model  $\mathbf{y}=\mathbf{\Phi}\mathbf{x}_S+\mathbf{e}^{\prime}$ in (\ref{eq:general_model}) directly. The conclusion in the realistic case can be  specialized to the idealized case simply by setting $\e^{\prime}=\0$. In fact, setting $\e^{\prime}=\0$ inside our derivations would simplify them considerably.
\begin{theorem}
\label{thm:main}
 For the general CS model $\mathbf{y}=\mathbf{\Phi}\mathbf{x}_S+\mathbf{e}^{\prime}$ in (\ref{eq:general_model}), if $\dfrac{1}{1-\delta_{3s}}-\dfrac{1+\delta_{3s}}{2\delta_{3s}\sqrt{1+2\delta_{3s}^2}}<\mu<1$, or $1<
 \mu<\dfrac{1}{1+\delta_{3s}}+\dfrac{1-\delta_{3s}}{2\delta_{3s}\sqrt{1+2\delta_{3s}^2}}$, or $\mu=1$ with $\delta_{3s}<0.5340$, then the sequence of $\mathbf{x}^{n}$ defined by STP satisfies
\begin{equation}
\Vert\mathbf{x}_S-\mathbf{x}^{n}\Vert_2\le\rho^{n}\Vert\mathbf{x}_S\Vert_2
+\tau\Vert\mathbf{e}^{\prime}\Vert_2, \label{eq:main}
\end{equation}
where
\begin{eqnarray}
\rho&=&\frac{2\delta_{3s}(|\mu-1|+\mu\delta_{3s})\sqrt{1+2\delta_{3s}^2}}{1-\delta_{3s}^2}<1,\label{eq:rho} \\
(1-\rho)\tau&=&\left(\frac{\sqrt{2+\sqrt{2}}\delta_{3s}(|\mu-1|+\mu\delta_{3s})}{\sqrt{1-\delta_{3s}^2}}+1\right)\nonumber\\
&&\cdot\frac{\sqrt{2(1-\delta_{3s})}+\sqrt{1+\delta_{3s}}}{1-\delta_{3s}}\nonumber\\
&&\qquad+\frac{\sqrt{4+\sqrt{2}}(|\mu-1|+\mu\delta_{3s})}{\sqrt{1-\delta_{3s}}}.\label{eq:tau}
\end{eqnarray}
\end{theorem}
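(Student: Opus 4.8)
The plan is to reduce the theorem to a single one-step contraction of the form
\[
\|\x_S-\x^{n}\|_2 \le \rho\,\|\x_S-\x^{n-1}\|_2 + (1-\rho)\tau\,\|\e^{\prime}\|_2,
\]
with $\rho$ and $(1-\rho)\tau$ exactly as in (\ref{eq:rho}) and (\ref{eq:tau}), and then to iterate it. Since $\x^0=\0$, unrolling this recursion and summing the geometric series $\sum_{k=0}^{n-1}(1-\rho)\rho^{k}\le 1$ immediately gives (\ref{eq:main}). Hence the whole content of the theorem lies in the one-step estimate together with the verification that $\rho<1$ precisely on the stated $\mu$-ranges.

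To obtain the one-step estimate I would split a single STP iteration at its natural seam: the \emph{SP phase} (steps 1--5), which maps $\x^{n-1}$ to the pruned vector $\u^{n}$, and the \emph{IHT phase} (steps 6--7), which maps $\u^{n}$ to $\x^{n}$. Each phase is controlled by the two standard RIP workhorses: a support-capture inequality (exploiting that a thresholding step retains the $s$ largest magnitudes, so the energy of the discarded true-support indices is dominated by that of the wrongly kept indices) and a least-squares error bound of the form $\|\x_S-\z\|_2\le\frac{1}{\sqrt{1-\delta^2}}\,\|(\x_S)_{\overline{T}}\|_2+(\text{noise})$ on the active support $T$, where $\delta$ is the relevant restricted isometry constant. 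The one genuinely new ingredient is the weighted IHT identification of step 6: writing $\u^{n}+\mu\bPhi^{\! *}(\y-\bPhi\u^{n})-\x_S=(\I-\mu\bPhi^{\! *}\bPhi)(\u^{n}-\x_S)+\mu\bPhi^{\! *}\e^{\prime}$ and using that the restricted spectral norm of $\I-\mu\bPhi^{\! *}\bPhi$ is at most $|\mu-1|+\mu\delta_{3s}$, the selection of $S^{n}$ contributes exactly the factor $(|\mu-1|+\mu\delta_{3s})$ that carries the $\mu$-dependence of $\rho$.

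Combining the two phases---multiplying their contraction factors and adding their perturbation terms---produces $\rho=\frac{2\delta_{3s}(|\mu-1|+\mu\delta_{3s})\sqrt{1+2\delta_{3s}^2}}{1-\delta_{3s}^2}$: the factor $(|\mu-1|+\mu\delta_{3s})$ comes from the IHT identification, the $\frac{1}{1-\delta_{3s}^2}$ and $\sqrt{1+2\delta_{3s}^2}$ from the least-squares and pruning steps together with their cross terms, and the leading $2$ from the two thresholding selections. The bookkeeping that must be done with care is keeping every restricted-isometry order at $3s$ rather than letting it inflate to $4s$: this forces one to bound the relevant support unions, such as $S\cup S^{n}\cup\supp(\u^{n})$ and $S\cup\tS^{n}$, by $3s$, and to group the noise contributions $\|\bPhi_T^{\! *}\e^{\prime}\|_2\le\sqrt{1+\delta_{3s}}\,\|\e^{\prime}\|_2$ so that they assemble into the exact constant $(1-\rho)\tau$ of (\ref{eq:tau}).

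It then remains to translate $\rho<1$ into the admissible ranges of $\mu$. The inequality $\rho<1$ is equivalent to $|\mu-1|+\mu\delta_{3s}<\frac{1-\delta_{3s}^2}{2\delta_{3s}\sqrt{1+2\delta_{3s}^2}}$; resolving the absolute value in the cases $\mu<1$ and $\mu>1$ yields the two open intervals in the statement, while $\mu=1$ reduces to $2\delta_{3s}^2\sqrt{1+2\delta_{3s}^2}<1-\delta_{3s}^2$, whose numerical root gives $\delta_{3s}<0.5340$. I expect the main obstacle to be neither the iteration nor this final algebra, but the constant-tight analysis of the combined phases: one must arrange the support-capture and least-squares steps so that the isometry order never exceeds $3s$ while still extracting the clean product form of $\rho$ and the explicit perturbation constant, since a naive chaining of the off-the-shelf SP and IHT guarantees would both raise the order to $4s$ and loosen the coefficient of $\|\e^{\prime}\|_2$.
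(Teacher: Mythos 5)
Your proposal follows essentially the same route as the paper's own proof: a one-step contraction obtained by splitting each iteration into the SP phase (steps 1--5, bounded via the OMP-like support-capture lemma, the least-squares orthogonality bounds, and the pruning estimate) and the IHT phase (steps 6--7, bounded via the $\mu$-weighted identification lemma with the restricted-norm factor $|\mu-1|+\mu\delta_{3s}$ of $\I-\mu\bPhi^{\! *}\bPhi$), all at RIP order $3s$, followed by unrolling the recursion and resolving $|\mu-1|+\mu\delta_{3s}<\frac{1-\delta_{3s}^2}{2\delta_{3s}\sqrt{1+2\delta_{3s}^2}}$ into the three stated $\mu$-ranges. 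Your factor accounting for $\rho$ (the product of the two phases' contraction constants, with the $2$ arising as $\sqrt{2}\cdot\sqrt{2}$ from the two identification/pruning combinations) and the noise bookkeeping via $\|(\bPhi^{\! *}\e^{\prime})_T\|_2\le\sqrt{1+\delta_{3s}}\|\e^{\prime}\|_2$ match the paper's derivation.
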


The proof of Theorem \ref{thm:main} can be found in Appendix \ref{proof1}.

In Fig. \ref{fg:delta3s}, we plot the upper bound of $\delta_{3s}$ as $\mu$ changes. From Fig. \ref{fg:delta3s}, we know that when $\mu=1$, we get the best theoretical guarantee $\delta_{3s}<0.5340$ of STP which is a little weaker than the best theoretical guarantee $\delta_{3s}<\frac{1}{\sqrt{3}}\approx0.5773$ of HTP and IHT for the kind of iterative greedy algorithms so far. In addition, when $\mu\neq1$, we can still get a theoretical guarantee comparable to that of $\ell_1$ minimization in terms of RIP, i.e., the RIC is bounded by a positive constant. However, as we will see in Section \ref{sec:simulations}, the optimal $\mu$ which makes STP attain the optimal empirical performance is the one that is larger than $1$ in most cases.


\begin{figure}
  \centering
  \includegraphics[scale=0.5]{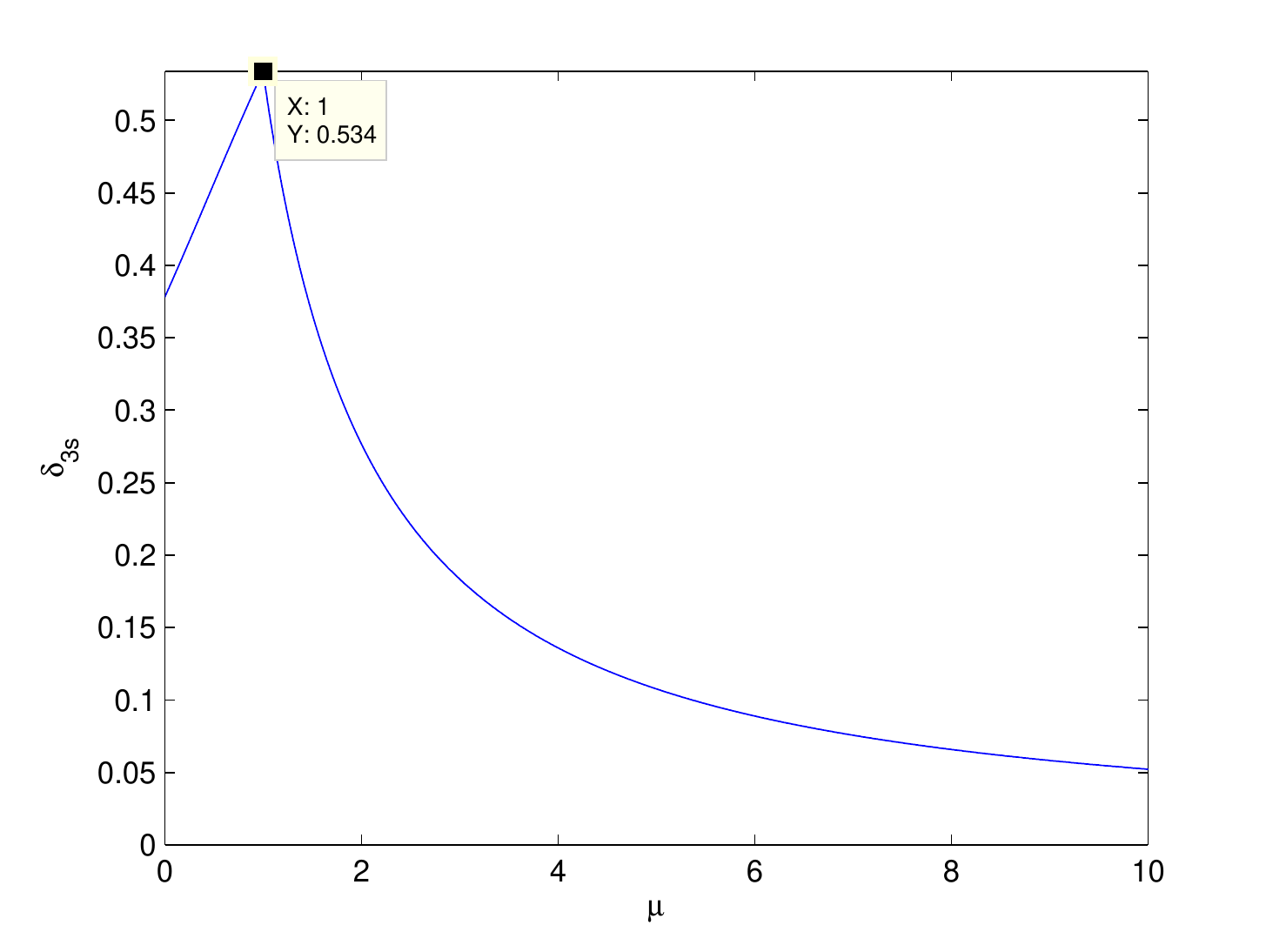}\\
  \caption{The upper bound of $\delta_{3s}$ as $\mu$ changes}\label{fg:delta3s}
\end{figure}

\subsection{\label{sub:iterations}The number of iterations}

In the following, the number of iterations of STP to reconstruct s-sparse signals in the noiseless case is considered and it is shown that STP converges in a finite number of iterations.
\begin{theorem}
\label{thm:iter-num}
Assume that the measurement matrix $\bPhi$ and the parameter $\mu$ satisfy $\rho=\frac{2\delta_{3s}(|\mu-1|+\mu\delta_{3s})\sqrt{1+2\delta_{3s}^2}}{1-\delta_{3s}^2}<1$,
particularly when $\mu=1$, $\delta_{3s}<0.5340$, then any $s$-sparse vector $\x\in\bbR^{N}$ is reconstructed by STP with $\y=\bPhi\x$ in at most
\be
\min\left\{\left\lceil \frac{\ln\|\x\|_2/\xi}{\ln {1}/{\rho}} \right\rceil, \left\lceil\dfrac{1.5s}{\ln {1}/{\rho}}\right\rceil\right\} \;\mbox{iterations}. \label{eq:iter}
\ee
where $\xi$ is defined as the smallest magnitude of all the nonzero entries in $\x$.
\end{theorem}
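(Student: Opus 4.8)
The plan is to run everything off the error estimate of Theorem~\ref{thm:main}. First I would specialize it to the noiseless, exactly-sparse setting: since $\x$ is $s$-sparse we have $\x=\x_S$ and $\e^{\prime}=\0$, so \eqref{eq:main} collapses to $\Vert\x-\x^{n}\Vert_2\le\rho^{n}\Vert\x\Vert_2$, with $\rho<1$ guaranteed by the stated hypotheses on $\mu$ and $\delta_{3s}$. It is worth noting that this cumulative bound is itself the iterate of a one-step contraction $\Vert\x-\x^{n}\Vert_2\le\rho\,\Vert\x-\x^{n-1}\Vert_2$, so the same geometric decay can be ``restarted'' from any iterate $\x^{m}$ with $\Vert\x-\x^{m}\Vert_2$ in place of $\Vert\x\Vert_2$; I expect to need this for the second bound.

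For the first term in \eqref{eq:iter} I would argue that exact support capture forces exact recovery. If $\Vert\x-\x^{n}\Vert_2<\xi$, then no index of $\supp(\x)$ can be missing from $S^{n}$: a missing index $i$ would give $x^{n}_i=0$ and hence $\vert x_i-x^{n}_i\vert=\vert x_i\vert\ge\xi$, contradicting the bound. Thus $\supp(\x)\subseteq S^{n}$, and since $\delta_{3s}<1$ makes $\bPhi_{S^{n}}$ full column rank, the final least-squares step returns the unique zero-residual solution, namely $\x^{n}=\x$. It then remains to find the smallest $n$ with $\rho^{n}\Vert\x\Vert_2<\xi$; solving $\rho^{n}<\xi/\Vert\x\Vert_2$ gives $n>\ln(\Vert\x\Vert_2/\xi)/\ln(1/\rho)$, i.e.\ the first quantity in the minimum. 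This part is routine.

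The second term $\lceil 1.5s/\ln(1/\rho)\rceil$ is the crux, and here the global decay alone does not suffice: the ratio $\Vert\x\Vert_2/\xi$ can be arbitrarily large (e.g.\ for geometrically decaying entries), so the first bound is unbounded and cannot simply be re-expressed through $s$. I would therefore argue by progressive, magnitude-ordered capture. Let $a_1\ge\cdots\ge a_s$ be the magnitudes of the nonzero entries, so $a_s=\xi$ and $a_k\le\Vert\x\Vert_2/\sqrt{k}$. The contrapositive of the previous paragraph shows that after iteration $n$ all indices whose magnitude exceeds $\rho^{n}\Vert\x\Vert_2$ already lie in $S^{n}$; combined with a ``once captured, a large index stays captured'' property (which I would extract from the one-step contraction keeping $\x^{n}$ close to $\x$ on the already-identified large coordinates), I would group $\{1,\dots,s\}$ into magnitude bands and bound the iterations needed to clear each band after restarting the contraction from the tail signal supported on the not-yet-captured indices. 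The tail norm driving each stage is controlled by $a_k\le\Vert\x\Vert_2/\sqrt{k}$, the stage lengths telescope to a count linear in $s$, and the constant $1.5$ should emerge from the band sizing together with the RIP-dependent least-squares constants.

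The main obstacle is precisely keeping this accumulation linear. A naive index-by-index restart charges roughly $\ln(\sigma_k/a_{k+1})/\ln(1/\rho)$ iterations to capture the $(k{+}1)$-st entry, and the sum of these can grow like $s\ln s$ for nearly flat signals; to land on a clean $1.5s$ one must instead clear an entire band of comparable entries in a single stage, carefully balancing the band widths against the geometric rate $\rho$. Finally, since both displayed quantities are legitimate upper bounds on the number of iterations, the algorithm terminates within their minimum, which is exactly \eqref{eq:iter}.
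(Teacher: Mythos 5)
Your first bound is handled correctly, albeit by a different route than the paper: you argue that $\Vert\x-\x^{n}\Vert_2<\xi$ forces $\supp(\x)\subseteq S^{n}$ and hence exact recovery by the final least squares step, whereas the paper follows Foucart's HTP argument and shows directly that once $\xi\ge\rho^{n}\Vert\x\Vert_2$ the IHT-like identification in step 6 already selects $S^{n}=S$. Both arguments are valid and give the first term in \eqref{eq:iter}.

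The second bound is where your proposal has a genuine gap, and you essentially concede it: the banding plan is a program, not a proof. Two ingredients are missing. First, the quantity that actually admits a clean restart is not the signal error $\Vert\x-\x^{n}\Vert_2$ but the energy of the not-yet-captured part of the support: the paper combines the noiseless versions of \eqref{eq:stp1} and \eqref{eq:stp2} with \eqref{eq:stp3} applied at iteration $n-1$ to obtain the recursion $\Vert\x_{\overline{S^{n}}}\Vert_2\le\rho\Vert\x_{\overline{S^{n-1}}}\Vert_2$ in \eqref{eq:iter5}. This is precisely the ``once captured, restart from the tail'' tool you said you would need to extract, and it does not follow from the contraction of $\Vert\x-\x^{n}\Vert_2$ alone, since $\x^{n}$ need not agree with $\x$ on already-captured coordinates. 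Second, once that recursion is available, the paper does not carry out the magnitude-band bookkeeping at all: \eqref{eq:iter5} has exactly the form of the recursion in \cite[Theorem 2]{dai2009subspace}, so \cite[Theorem 8]{dai2009subspace} applies verbatim with $\rho$ in place of the constant $c_K$ there, and that cited theorem is what produces $\left\lceil 1.5s/\ln(1/\rho)\right\rceil$. The band-sizing argument you sketch---including the balancing needed to avoid the $s\ln s$ blow-up you correctly identify, and the emergence of the constant $1.5$---is precisely the content of that external theorem, and nothing in your proposal actually establishes it; as written, your argument proves only the first term of the minimum in \eqref{eq:iter}.
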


The proof of Theorem \ref{thm:iter-num} can be found in Appendix \ref{proof2}.

Compared with SP, STP adds steps $5$ and $6$. The runtime of step 5 is negligible and the computational complexity of step  6 in STP is comparable to step 1, i.e., the OMP-identification step, so the computational complexity analysis for SP in \cite{dai2009subspace} is also suitable for STP. Generally, STP has a comparable computational complexity with SP. In each iteration, STP needs an extra computational step, but it has less number of iterations than SP since it has a better convergence rate $\rho$ than SP under the same measurement matrix.

%
\section{\label{sec:simulations}Simulations and Discussions}

The theoretical guarantee in terms of RIC provides us the intuition of the worst-case
 performance of a reconstruction algorithm. But it does not say much about empirical performance since the theoretical guarantee by RIP is very weak (see the ``strong
 phase transition curve'' in \cite{blanchard2011phase}) and there is no efficient way to verify the RIC condition.
 In our simulations, we use the testing strategy in
 \cite{candes2005error,dai2009subspace} which measures the effectiveness of reconstruction  algorithms by checking the exact reconstruction rate in the  noiseless case. By comparing the maximal sparsity level of the underlying sparse signals at which the perfect reconstruction is ensured (this point is often
 called critical sparsity \cite{dai2009subspace}), the performance of the reconstruction can be compared empirically. In our simulations, we consider OMP, SP, CoSaMP,  NIHT, HTP,  $\ell_1$
 minimization  and STP with different $\mu$ ($\mu=1,1.5,2,2.5,3,3.5$). We let OMP execute $s$ steps. For other greedy algorithms, we use a common stopping criteria ``$n>200$ or $\|\y-\bPhi\x^n\|_2<10^{-10}\|\y\|_2$''.
For $\ell_1$ minimization, we use the default setting in the $\ell_1$-magic package (\url{http://users.ece.gatech.edu/~justin/l1magic/}).
Generally speaking, in the realistic case, the general case may be $m\ll N$, thus selecting a relatively small undersampling ratio such as $\tau=\frac{m}{N}=0.1$ may give us a better intuition about empirical performance. Therefore, in each trial, we construct a $m\times N(m=100,N=1000)$ measurement matrix $\bPhi$ with entries drawn independently from Gaussian distribution $\mathbb{N}(0,\frac{1}{m})$.
In addition, we generate an $s$-sparse vector $\x$ whose support is chosen at random. Two types of sparse signals are considered: Gaussian signals and CARS signals.
 Each nonzero element of Gaussian signals is drawn from standard Gaussian distribution and that of CARS signals is  from the set $\{1,-1\}$ uniformly at random.
  For each reconstruction algorithm, we perform 2,000 independent trials and plot the exact reconstruction rate in $y$-axis as the sparsity $s$ changes in $x$-axis.

In each figure, we plot three curves of STP with three different $\mu$: $\mu=1$, $\mu=\mu_{*}$, $\mu=\mu_{*}+0.5$, where $\mu_{*}$ stands for the $\mu$ with which  STP has the optimal empirical performance.
In Fig. \ref{fg:rate-test:a}, we show that in the Gaussian signal case,
the critical sparsity of STP with  $\mu_{*}$ (in this case $\mu_{*}=3$) exceeds all the other algorithms greatly. In Fig. \ref{fg:rate-test:b},
we show that in the CARS signal case, the empirical performance of STP with $\mu_{*}$ (in this case $\mu_{*}=2.5$) exceeds that of all the other algorithms, even including $\ell_1$ minimization.
To the best of our knowledge, all the existing iterative greedy algorithms  perform worse than $\ell_1$ minimization
in the CARS signal case. However STP with  $\mu_{*}$ breaks the limitation if the undersampling ratio is not very large, such as $\tau=0.1$.

 \begin{figure}[h]
\centering
\subfigure[Exact reconstruction rate in the Gaussian signal case]{
\label{fg:rate-test:a}
\centering
\includegraphics[scale=0.5]{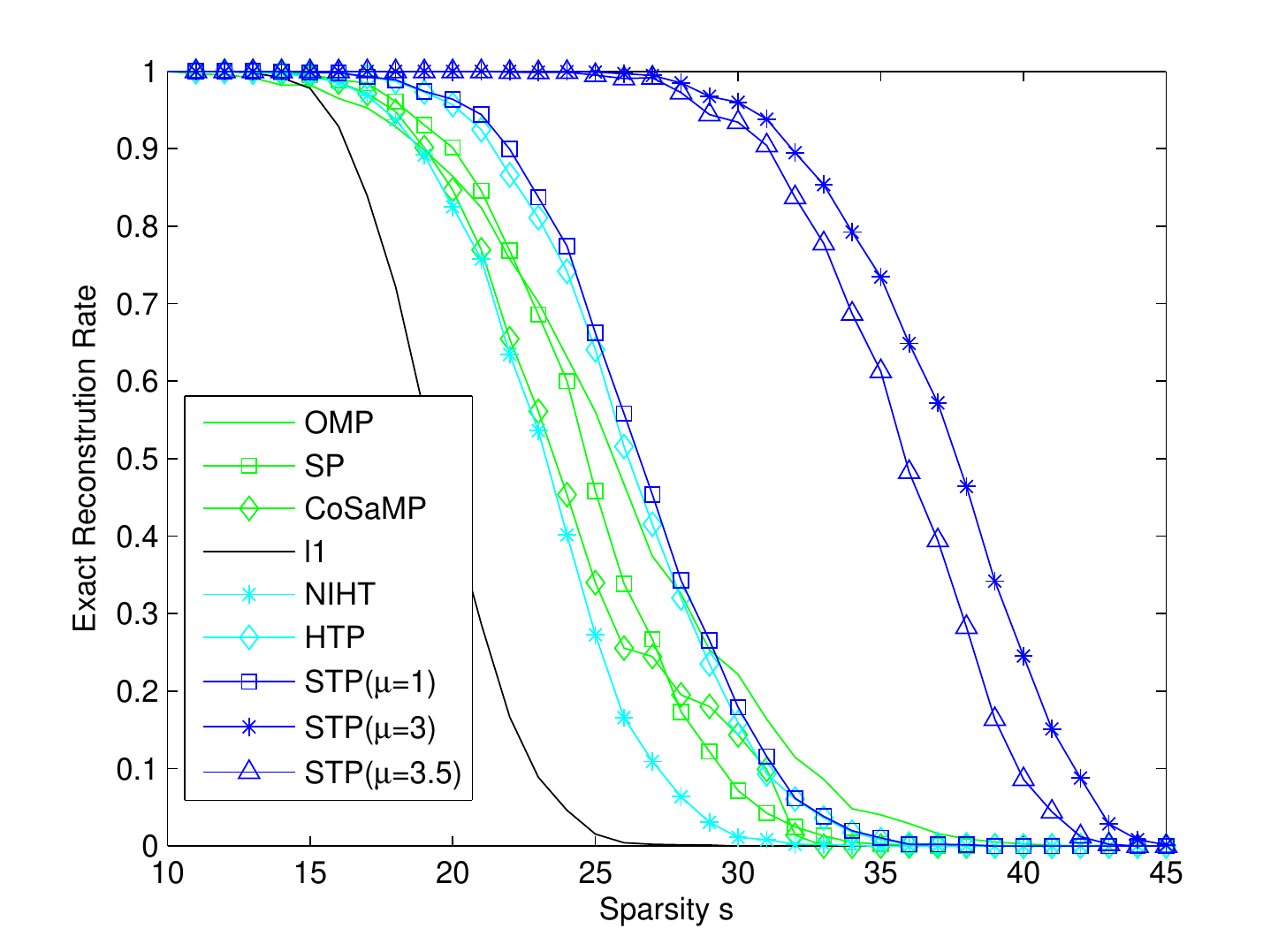}
}
\subfigure[Exact reconstruction rate in the CARS signal case]{
\label{fg:rate-test:b}
\centering
\includegraphics[scale=0.5]{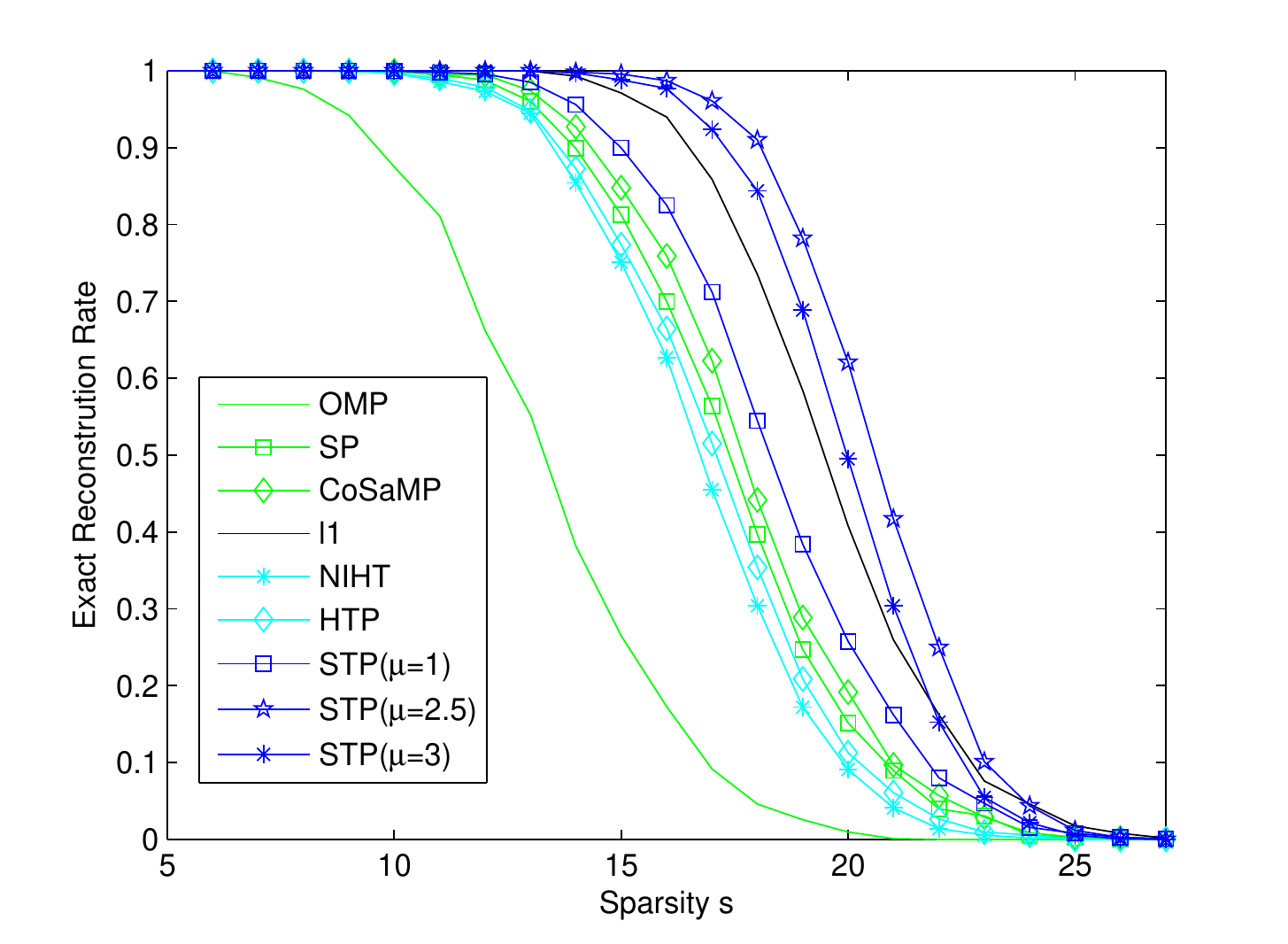}
}
\caption{Exact reconstruction rate under $100\times 1000$ Gaussian measurement matrix with $\tau=0.1$.}\label{fg:rate-test}
\end{figure}

Furthermore,
STP has its best theoretical guarantee $\delta_{3s}<0.5340$ when $\mu=1$. But from Figs. \ref{fg:rate-test:a} and \ref{fg:rate-test:b}, STP has the optimal empirical performance with $\mu_{*}=3$ in the Gaussian signal case and $\mu_{*}=2.5$ in the CARS signal case respectively.
A similar case also happens in the variant NIHT of IHT. NIHT has a better empirical performance than IHT, but its theoretical guarantee is worse than that of IHT.
On the one hand, the RIC upper bounds for different algorithms are limited by derivation skills; on the other hand, the RIC condition is only the sufficient condition
to guarantee sparse recovery, even if $\delta_{2s}$ approximates 1, we still can't say that our algorithm can't reconstruct $s$-sparse signals uniformly under some
specified measurement matrix (see the example in \cite{davies2009restricted}). Therefore, theoretical guarantee tells us how bad a algorithm will not be, but the more
important measure in practice may be empirical performance.

\begin{table*}[t]
\caption{The critical sparsity of algorithms under Gaussian measurement matrices with different sizes in the Gaussian signal case.}\label{tb:1}
\centering
\begin{tabular}{|c|c|c|c|c|c|c||c|c|c|c|c|c|c|}
\hline
Algorithms & OMP & $\ell_1$ & SP & CoSaMP & NIHT & HTP & STP1 & STP1.5 & STP2 & STP2.5 & STP3 & STP3.5 \\
\hline
Critical Sparsity(100$\times$3000) & 8 & 9 & 10 & 9 & 10 & \textbf{12} & 11 & 14 & 16 & 19   & \textbf{\emph{19}} & 16 \\
\hline
Critical Sparsity(100$\times$1000) & 10 & 12 & 13 & 13 & 12 & \textbf{14} & 15 & 18 & 22 & 24   & \textbf{\emph{24}} & 22 \\
\hline
Critical Sparsity(200$\times$1000) & 24 & 36 & \textbf{46} & 42 & 38 & 44 & 50 & 56 & 58 & 64   & \textbf{\emph{68}} & 64 \\
\hline
Critical Sparsity(300$\times$1000) & 47 & 68 & \textbf{83} & 80 & 68 & 83 & 89 & 98 & 107 & 116   & \textbf{\emph{122}} & 116\\
\hline
Critical Sparsity(400$\times$1000) & 57 & 107 & \textbf{127} & 127 & 107 & 122 & 137 & 152 & 167 & \textbf{\emph{172}}   & 172 & 162\\
\hline
Critical Sparsity(500$\times$1000) & 72 &157 &\textbf{197}   & 162 & 152 & 177 &202  &217  &\textbf{\emph{232}}  & 227   & 222 & 217 \\
\hline
Critical Sparsity(600$\times$1000) & 87 & 222 & \textbf{287} & 217 & 217 & 237 & \textbf{\emph{288}} & 287 & 287 & 282   & 277 & 267 \\
\hline
\end{tabular}
\end{table*}

\begin{table*}[t]
\caption{The critical sparsity of algorithms under Gaussian measurement matrices with different sizes in the CARS signal case.}\label{tb:2}
\centering
\begin{tabular}{|c|c|c|c|c|c|c||c|c|c|c|c|c|c|}
\hline
Algorithms & OMP & $\ell_1$ & SP & CoSaMP & NIHT & HTP & STP1 & STP1.5 & STP2 & STP2.5 & STP3 & STP3.5 \\
\hline
Critical Sparsity(100$\times$3000) & 4 & \textbf{9} & 5 & 6 & 7 & 6 & 7 & 8 & 10 & \textbf{\emph{11}} & 10 & 9 \\
\hline
Critical Sparsity(100$\times$1000) & 5 & \textbf{13} & 10 & 9 & 9 & 8 & 11 & 11 & 13 & \textbf{\emph{14}} & 14 & 13 \\
\hline
Critical Sparsity(200$\times$1000) & 12 & \textbf{38} & 31 & 31 & 29 & 29 & 31 & 36 & 36 & \textbf{\emph{37}}   & 37 & 35\\
\hline
Critical Sparsity(300$\times$1000) & 18 & \textbf{68} & 58 & 58 & 52 & 52 & 60 & 64 & \textbf{\emph{68}} & 68   & 64 & 60\\
\hline
Critical Sparsity(400$\times$1000) & 23 & \textbf{110} & 95 & 95 & 71 & 80 & 98 & \textbf{\emph{104}} & 104 & 104   & 104 & 98\\
\hline
Critical Sparsity(500$\times$1000) & 30 & \textbf{166} & 138 & 130 & 102 & 118 & 142 & \textbf{\emph{150}} & 142 & 142   & 142 & 134\\
\hline
Critical Sparsity(600$\times$1000) & 42 & \textbf{234} & 190 & 182 & 142 & 150 & 194 & \textbf{\emph{198}} & 194 & 190   & 190 & 186\\
\hline
\end{tabular}
\end{table*}

Then we focus on the the key performance measure: critical sparsity in the exact reconstruction rate curve. Tables \ref{tb:1} and \ref{tb:2} show the critical sparsity of all the algorithms in our simulations
under Gaussian measurement matrices with different sizes in the Gaussian signal case and CARS signal case respectively. In both tables, STP$c$ stands for  STP  with
$\mu=c$ and in the first columns of both tables, the formulae $m\times N(m=100,\cdots, 600, N=1000,3000$) in brackets denote the sizes of the Gaussian measurement
 matrices we used. In each row, there are two boldfaced numbers, the normal one stands for the maximal critical sparsity among the existing algorithms, the itatic one the maximal critical
 sparsity among all the STP algorithms with different $\mu$. When there exist two or more STP algorithms with different $\mu$ having an identical critical sparsity, we highlight
 the critical sparsity of the STP algorithm which has better exact reconstruction rate when the signal sparsity is larger than critical sparsity.

Firstly, from Tables \ref{tb:1} and \ref{tb:2}, we find that $\mu_{*}$ that makes STP perform best is bounded in a limited range; meanwhile, in our experiments, we find that the empirical performance of STP changes distinctly (we say algorithm A outperforms algorithm B distinctly if the exact reconstruction rate of A in any point is equal or greater than that of B) only when the step size of $\mu$ is larger than a positive value (in our case the positive value may roughly be 0.5), so in practice tuning the value of $\mu$ is just to select $\mu{_*}$ from a set of discrete values such as $\{0,0.5,1,1.5,2,2.5,3,3.5\}$.
In fact, from the two tables, in the Gaussian signal case, if $\tau\le0.3$ and the step size of $\mu$ is $0.5$, $\mu_*$ is $3$; in the
CARS signal case, if $\tau\le0.2$ and the step size of $\mu$ is $0.5$, $\mu_*$ is $2.5$. Secondly, we notice that in all the cases
($\tau=0.033, 0.1, 0.2, 0.3, 0.4, 0.5, 0.6$), STP with $\mu_{*}$ outperforms all the other greedy algorithms  when reconstructing both Gaussian signals and CARS signals.
When compared with $\ell_1$ minimization, STP with  $\mu_*$ performs worse than $\ell_1$ minimization  obviously  only when $\tau\ge0.4$ and the signal type is CARS.
In other cases, STP with $\mu_*$ has a better empirical performance, particularly the critical sparisity of STP is nearly twice than that of $\ell_1$ minimization if $\tau\le0.4$ and the signal type is Gaussian.
Thirdly, we notice that $\mu_*$ is different when reconstructing Gaussian signals and CARS signals. But since  the empirical performance of STP changes gradually as $\mu$ changes, we can select a $\mu$ between the $\mu_*$ in the Gaussian signal case and the $\mu_*$ in the CARS signal case in practice.

Generally, by our simulations, STP with $\mu_*$ has good empirical performance.
Compared with all other reconstruction algorithms, in terms of critical sparsity, it is more appropriate to realistic applications---STP will has more obvious superority as the undersampling ratio decreases.

\section{\label{sec:extended-study}Some Extended Study}
 Firstly, the empirical performance of an iterative greedy algorithm depends not only on its greedy strategy, but also on  that every step does it work rightly.
 In subsection \ref{sub:overfitting}, we propose a simple but effective way to make step 3 of STP work rightly if the undersampling ratio is large and the original signal is a Gaussian one with a sparsity around $\lceil m/2\rceil$. Secondly,
from the above sections, we know that STP is only a simple combination of SP and IHT. A direct idea may be to combine other similar iterative greedy algorithms and IHT. In subsection \ref{sub:iht-like}, we show the improvement by the IHT-like identification step to four other iterative greedy algorithms: CoSaMP, HTP, SAMP and FBP.

\subsection{\label{sub:overfitting}A method to improve the critical sparisity further }
In the description Alg. \ref{alg:stp} of STP, we have a premise that the condition number of the submatrix $\bPhi_{\tilde{S}^n}^{\! *}\bPhi_{\tilde{S}^n}$ should not be too large or infinite (the RIP condition is to say the condition number of such a submatrix should not be too large essentially), otherwise the solution to the least squares problem will be impacted by the columns in $\bPhi_{\tilde{S}^n\backslash T}$ greatly  or  not be  unique respectively. As a result, the indices in $U^n$ may be far away from the support $S$ of $\x$ in each iteration. In this case, $S^n$ will not approximate $S$ rightly as the iteration goes. This case may happen if the undersampling ratio is large and the original signal is a Gaussian one and has a sparsity around $\lceil m/2\rceil$. In Table \ref{tb:1}, we show that as the undersampling ratio increases, the critical sparsity will approximate $\lceil m/2\rceil$ gradually and in the last row of Table \ref{tb:1}, the critical sparsity of STP with $\mu_*=1$ is $288$. But in our observations, when  the sparsity $s$ is a little larger than the critical sparsity $288$, the exact reconstruction rate will decrease to zero steeply. A direct interpretation may be that our greedy strategy, i.e., combing OMP-like identification and IHT-like identification together, can do more, but as the cardinality of $\tilde{S}^n$ increases, the condition number of $\bPhi_{\tilde{S}^n}^{\! *}\bPhi_{\tilde{S}^n}$ is too large to prevent the critical sparsity increasing further.

 %

In order to address this problem, a useful way may be to reduce the number of indices selected in step 1 when the sparsity $s$ exceeds the critical sparsity. Therefore, we add some simple logic before iteration and modify STP as follows.

\begin{algorithm}[H]
Input: $\y,\bPhi,s,\mu,\gamma$.\\
Initialization: $S^{0}=\emptyset,\x^0=\0$.\\
\textbf{If} $s>\gamma m$ \\
\textbf{then} $s^{\prime}=\lceil 2\gamma m\rceil-s$.\\
\textbf{else} $s^{\prime}=s$. \\
Iteration: At the $n$-th iteration, go through the following steps.
\begin{enumerate}
\item $\Delta S =$ \{$s^{\prime}$ indices corresponding to the  $s^{\prime}$ largest magnitude entries in the vector $\bPhi^{\! *}\, (\y- \bPhi \x^{n-1})$\}.
\item $\tS^{n}=S^{n-1} \bigcup \Delta S$.
\item $\tx^{n}=\text{arg}\min_{\z \in\bbR^N}\{ \Vert\y-\bPhi\z\Vert_2,\;
    \text{supp}(\z)\subseteq \tS^{n}\}$.

\item $U^{n}=$\{$s$ indices corresponding to the $s$ largest magnitude elements of $\mathbf{\tilde{x}}^{n}$\}.
\item $\mathbf{u}^{n}=$ \{the vector from ${\mathbf{\tilde{x}}}^{n}$ that keeps the entries of ${\mathbf{\tilde{x}}}^{n}$ in $U^{n}$ and set all other ones to zero.\}
\item $S^{n}=$\{$s$ indices correspoding to the $s$ largest magnitude entries of $\u^{n}+\mu\bPhi^{\! *}(\y-\bPhi\u^{n})$\}.
\item $\x^{n}=\text{arg}\min_{\z \in\bbR^N}\{ \Vert\y-\bPhi\z\Vert_2,\;
    \text{supp}(\z)\subseteq S^{n}\}$.

\end{enumerate}
until the stopping criteria is met. \\
Output: $\mathbf{x}^{n}$, \text{supp}($\mathbf{x}^{n}$).
\caption{Subspace Thresholding Pursuit version 2}\label{alg:stpv2}
\end{algorithm}
Let $s_{*}$ denote
the critical sparsity of the corresponding STP algorithm in the Gaussian signal case. In the above modified version STPv2, we add a new parameter $\gamma$ which can be selected  as $\frac{s_{*}}{m}$. By this modification, the
cardinality of $\tS^{n}$ will always be equal or less than $\lceil 2 \gamma m\rceil$, so the condition number of $\bPhi_{\tilde{S}^n}^{\! *}\bPhi_{\tilde{S}^n}$ will always be bounded by some reasonable value.
\begin{figure}[th]
\centering
\subfigure[Exact reconstruction rate under $150\times300$ Gaussian measurement matrix]{
\label{fg:stpv2-test:a}
\centering
\includegraphics[scale=0.5]{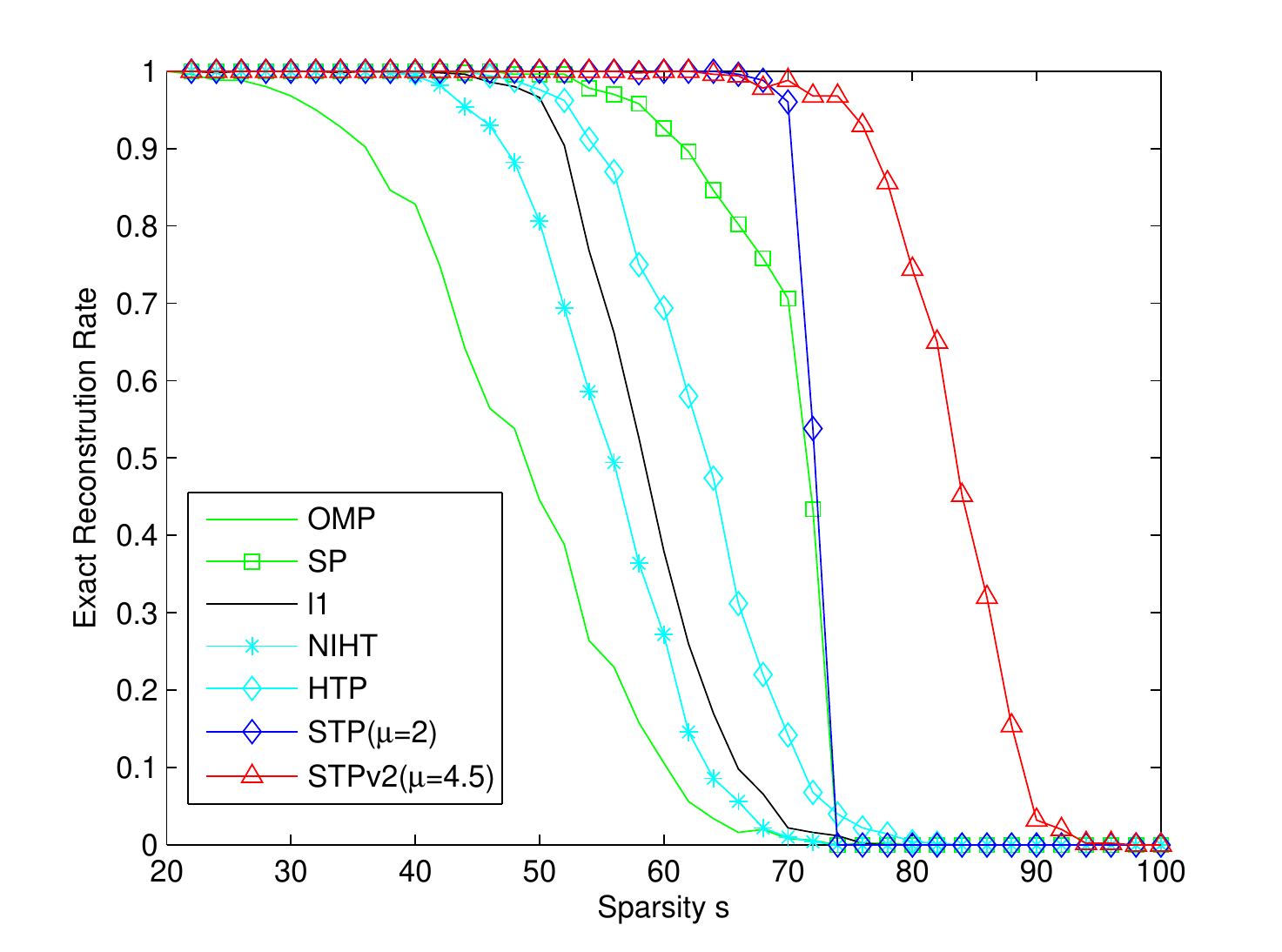}
}
\subfigure[Exact reconstruction rate under $210\times300$ Gaussian measurement matrix]{
\label{fg:stpv2-test:b}
\centering
\includegraphics[scale=0.47]{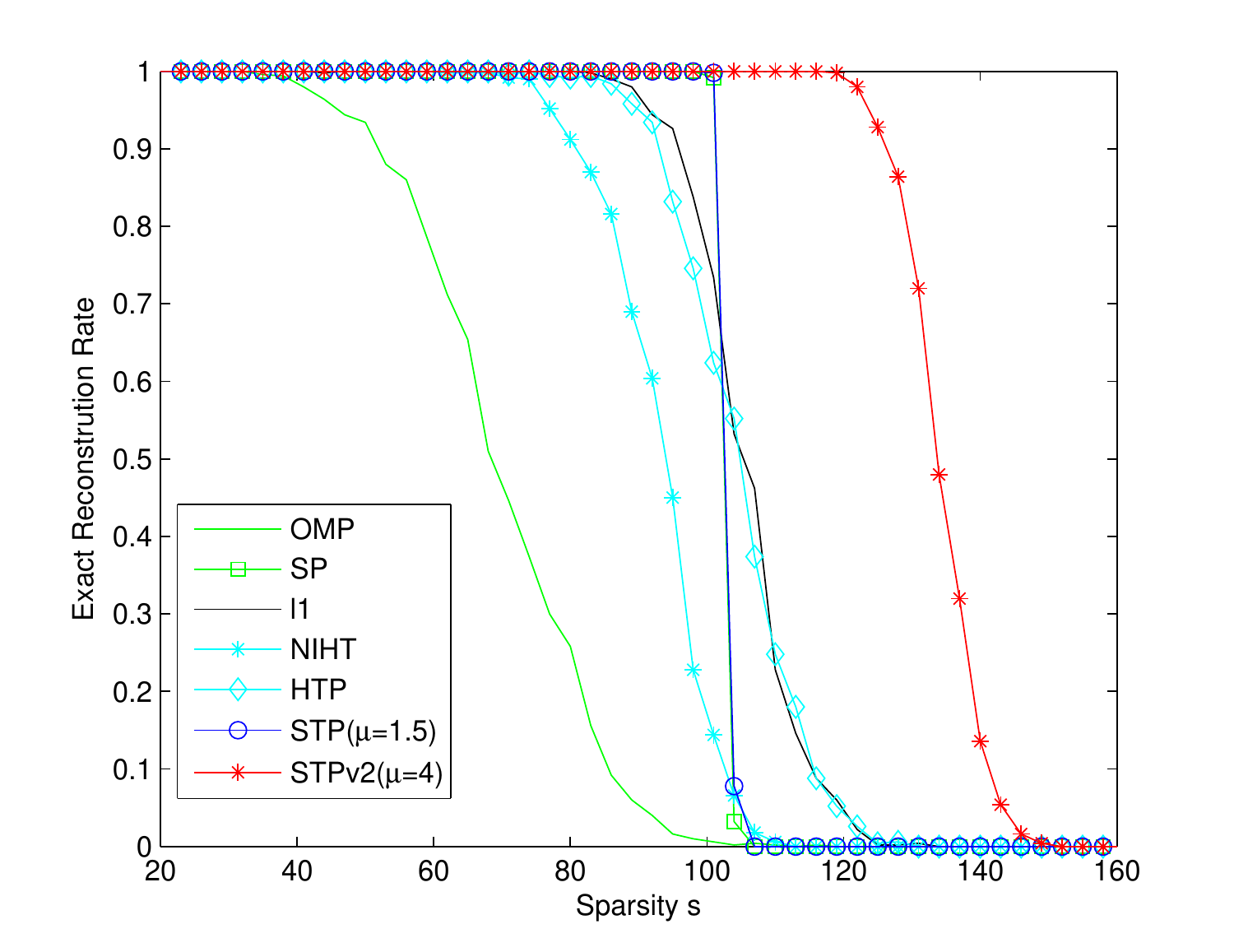}
}
\caption{The performance of STPv2 in the Gaussian signal case.}\label{fg:stpv2-test}
\end{figure}

The two subfigures in Fig. \ref{fg:stpv2-test} show the effect of this modification.
In the two subfigures, we use the Gaussian measurement matrix of size  $150\times300$, $210\times300$, perform 500 independent trails and show the curve of STP  and STPv2  with the corresponding $\mu_{*}$ respectively.
In Fig. \ref{fg:stpv2-test:a}, the critical sparsity of STPv2 starts to exceed the thresholding $\lceil m/2\rceil$ which is the bottleneck of all iterative
greedy algorithms that needs to keep a support set with size being equal or greater than $2s$ for solving a least-squares problems, such as SP, CoSaMP, STP.
In Fig. \ref{fg:stpv2-test:b}, the critical sparsity of STPv2 exceeds $\lceil m/2\rceil$ a lot.

The modification has little impact on the reconstruction capability in the CARS signal case since unless $\tau$ is too large (maybe meaningless in practice), the reconstruction capability of the STP's greedy strategy will
attain its limit before the condition number of $\bPhi_{\tilde{S}^n}^{\! *}\bPhi_{\tilde{S}^n}$ influence the improvement of critical sparsity; in addition, the critical sparsity in the CARS signal case is smaller than that in the Gaussian signal
case, so the reconstruction capability in the CARS signal case will not be impacted naturally if  we select $\gamma$ as $\frac{s_{*}}{m}$.

\subsection{\label{sub:iht-like}Improving performance of other greedy algorithms by IHT-like identification}

We can modify  CoSaMP, HTP, SAMP and FBP by adding the IHT-like identification step in a
suitable step and denote the resulting algorithms as CoSaMPv2, HTPv2, SAMPv2 and FBPv2 respectively. As we say in Section \ref{sec:Introduction}, CoSaMP and HTP are representatives of iterative greedy algorithms, while SAMP and FBP can be treated as
two different kinds of sparsity adaptive versions of SP, which are suitable to the situation that the sparsity $s$ is unknown.
\footnote{These algorithms do not need
the sparsity $s$ as their parameters which is useful, but we should notice that if we need to reconstruct exact $s$-sparsity signals in the noiseless case with probability $1$, the parameter $s$ can be simply set as the critical sparsity which can be tested a priori.
.}
The main steps of the modified versions CoSaMPv2, HTPv2, SAMPv2 and FBPv2 are summarized in Alg. \ref{alg:cosampv2}, \ref{alg:htpv2}, \ref{alg:sampv2} and \ref{alg:fbpv2}
respectively.

\begin{algorithm}[tH]
Input: $\y,\bPhi,s,\alpha,\mu$.\\
Initialization: $S^{0}=\emptyset,\x^0=\0$.\\
Iteration: At the $n$-th iteration, go through the following steps.
\begin{enumerate}
\item $\Delta S =$ \{$\alpha s$ indices corresponding to the  $\alpha s$ largest magnitude entries in the vector $\bPhi^{\! *}\, (\y- \bPhi \x^{n-1})$\}.
\item $\tS^{n}=S^{n-1} \bigcup \Delta S$.
\item $\tx^{n}=\text{arg}\min_{\z \in\bbR^N}\{ \Vert\y-\bPhi\z\Vert_2,\;
    \text{supp}(\z)\subseteq \tS^{n}\}$.

\item $U^{n}=$\{$s$ indices corresponding to the $s$ largest magnitude elements of $\mathbf{\tilde{x}}^{n}$\}.
\item $\mathbf{u}^{n}=$ \{the vector from ${\mathbf{\tilde{x}}}^{n}$ that keeps the entries of ${\mathbf{\tilde{x}}}^{n}$ in $U^{n}$ and set all other ones to zero.\}
\item $\x^{n}=$\{the vector  that keeps the $s$ largest magnitude entries of $\u^{n}+\mu\bPhi^{\! *}(\y-\bPhi\u^{n})$ and set all other ones to zero.\}

\end{enumerate}
until the stopping criteria is met. \\
Output: $\mathbf{x}^{n}$, \text{supp}($\mathbf{x}^{n}$).
\caption{Compressive sampling matching pursuit version  2 (CoSaMPv2)}\label{alg:cosampv2}
\end{algorithm}

\begin{algorithm}[tH]
Input: $\y,\bPhi,s,\alpha,\mu^{\prime},\mu$.\\
Initialization: $S^{0}=\emptyset,\x^0=\0$.\\
Iteration: At the $n$-th iteration, go through the following steps.
\begin{enumerate}
\item $\tS^{n} =$ \{$(\alpha+1) s$ indices corresponding to the  $(\alpha+1) s$ largest magnitude entries in the vector $\x^{n-1}+\mu^{\prime}\bPhi^{\! *}(\y-\bPhi\x^{n-1})$\}.
\item $\tx^{n}=\text{arg}\min_{\z \in\bbR^N}\{ \Vert\y-\bPhi\z\Vert_2,\;
    \text{supp}(\z)\subseteq \tS^{n}\}$.

\item $U^{n}=$\{$s$ indices corresponding to the $s$ largest magnitude elements of $\mathbf{\tilde{x}}^{n}$\}.
\item $\mathbf{u}^{n}=$ \{the vector from ${\mathbf{\tilde{x}}}^{n}$ that keeps the entries of ${\mathbf{\tilde{x}}}^{n}$ in $U^{n}$ and set all other ones to zero.\}
\item $S^{n}=$\{$s$ indices correspoding to the $s$ largest magnitude entries of $\u^{n}+\mu\bPhi^{\! *}(\y-\bPhi\u^{n})$\}.
\item $\x^{n}=\text{arg}\min_{\z \in\bbR^N}\{ \Vert\y-\bPhi\z\Vert_2,\;
    \text{supp}(\z)\subseteq S^{n}\}$.

\end{enumerate}
until the stopping criteria is met. \\
Output: $\mathbf{x}^{n}$, \text{supp}($\mathbf{x}^{n}$).
\caption{Hard Thresholding Pursuit version 2 (HTPv2)}\label{alg:htpv2}
\end{algorithm}

\begin{algorithm}[tH]
Input: $\y,\bPhi,$ $\nu_0$, $\mu$.\\
Initialization: $S^{0}=\emptyset,\x^0=\0,\nu=\nu_0$.\\
Iteration: At the $n$-th iteration, go through the following steps.
\begin{enumerate}
\item $\Delta S =$ \{$\nu$ indices corresponding to the  $\nu$ largest magnitude entries in the vector $\bPhi^{\! *}\, (\y- \bPhi \x^{n-1})$\}.
\item $\tS^{n}=S^{n-1} \bigcup \Delta S$.
\item $\tx^{n}=\text{arg}\min_{\z \in\bbR^N}\{ \Vert\y-\bPhi\z\Vert_2,\;
    \text{supp}(\z)\subseteq \tS^{n}\}$.

\item $U^{n}=$\{$s$ indices corresponding to the $s$ largest magnitude elements of $\mathbf{\tilde{x}}^{n}$\}.
\item $\mathbf{u}^{n}=$ \{the vector from ${\mathbf{\tilde{x}}}^{n}$ that keeps the entries of ${\mathbf{\tilde{x}}}^{n}$ in $U^{n}$ and set all other ones to zero.\}
\item $V=$\{$s$ indices correspoding to the $s$ largest magnitude entries of $\u^{n}+\mu\bPhi^{\! *}(\y-\bPhi\u^{n})$\}.
\item $\v=\text{arg}\min_{\z \in\bbR^N}\{ \Vert\y-\bPhi\z\Vert_2,\;
    \text{supp}(\z)\subseteq V\}$. \\
\textbf{if} the stopping criteria true then \\
quit the iteration; \\
\textbf{elseif} $\|\y-\bPhi\v\|_2\ge\|\y-\bPhi\x^{n-1}\|_2$ then \\
$\nu=\nu+\nu_0$; \\
\textbf{else} \\
$S^n=V$; \\
$\x^{n}=\v$; \\
\textbf{end if}
\end{enumerate}
Output: $\mathbf{x}^{n}$, \text{supp}($\mathbf{x}^{n}$).
\caption{Sparsity Adaptive Matching Pursuit version 2 (SAMPv2)}\label{alg:sampv2}
\end{algorithm}

\begin{algorithm}[tH]
Input: $\y,\bPhi,\mu, \nu,\chi$.\\
Initialization: $S^{0}=\emptyset,\x^0=\0$.\\
Iteration: At the $n$-th iteration, go through the following steps.
\begin{enumerate}
\item $\Delta S =$ \{$\nu$ indices corresponding to the  $\nu$ largest magnitude entries in the vector $\bPhi^{\! *}\, (\y- \bPhi \x^{n-1})$\}.
\item $\tS^{n}=S^{n-1} \bigcup \Delta S$.
\item $\tx^{n}=\text{arg}\min_{\z \in\bbR^N}\{ \Vert\y-\bPhi\z\Vert_2,\;
    \text{supp}(\z)\subseteq \tS^{n}\}$.

\item $U^{n}=$\{$|\tilde{S}^n|-\chi $ indices corresponding to the $|\tilde{S}^n|-\chi $ largest magnitude elements of $\mathbf{\tilde{x}}^{n}$\}.
\item $\mathbf{u}^{n}=$ \{the vector from ${\mathbf{\tilde{x}}}^{n}$ that keeps the entries of ${\mathbf{\tilde{x}}}^{n}$ in $U^{n}$ and set all other ones to zero.\}
\item $S^{n}=$\{$|U^{n}|$ indices correspoding to the $|U^{n}|$ largest magnitude entries of $\u^{n}+\mu\bPhi^{\! *}(\y-\bPhi\u^{n})$\}.
\item $\x^{n}=\text{arg}\min_{\z \in\bbR^N}\{ \Vert\y-\bPhi\z\Vert_2,\;
    \text{supp}(\z)\subseteq S^{n}\}$.

\end{enumerate}
until the stopping criteria is met. \\
Output: $\mathbf{x}^{n}$, \text{supp}($\mathbf{x}^{n}$).
\caption{Forward Backward  Pursuit version 2 (FBPv2)} \label{alg:fbpv2}
\end{algorithm}

\begin{figure}[th]
\centering
\subfigure[Exact reconstruction rate of CoSaMPv2 and HTPv2  in the Gaussian signal case]{
\label{fg:cosamp-htp-test:a}
\centering
\includegraphics[scale=0.5]{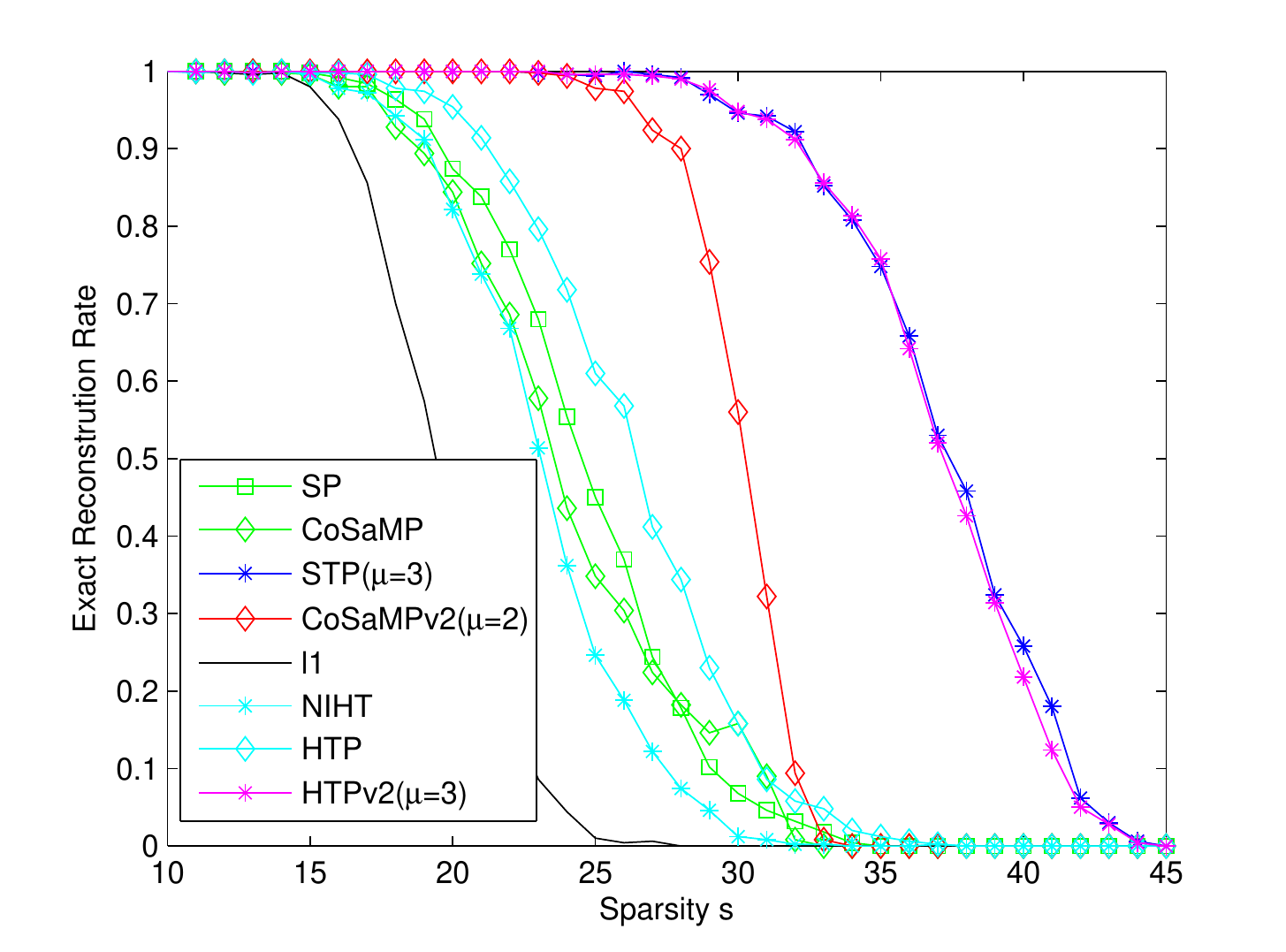}
}
\subfigure[Exact reconstruction rate of CoSaMPv2 and HTPv2 in the CARS signal case]{
\label{fg:cosamp-htp-test:b}
\centering
\includegraphics[scale=0.5]{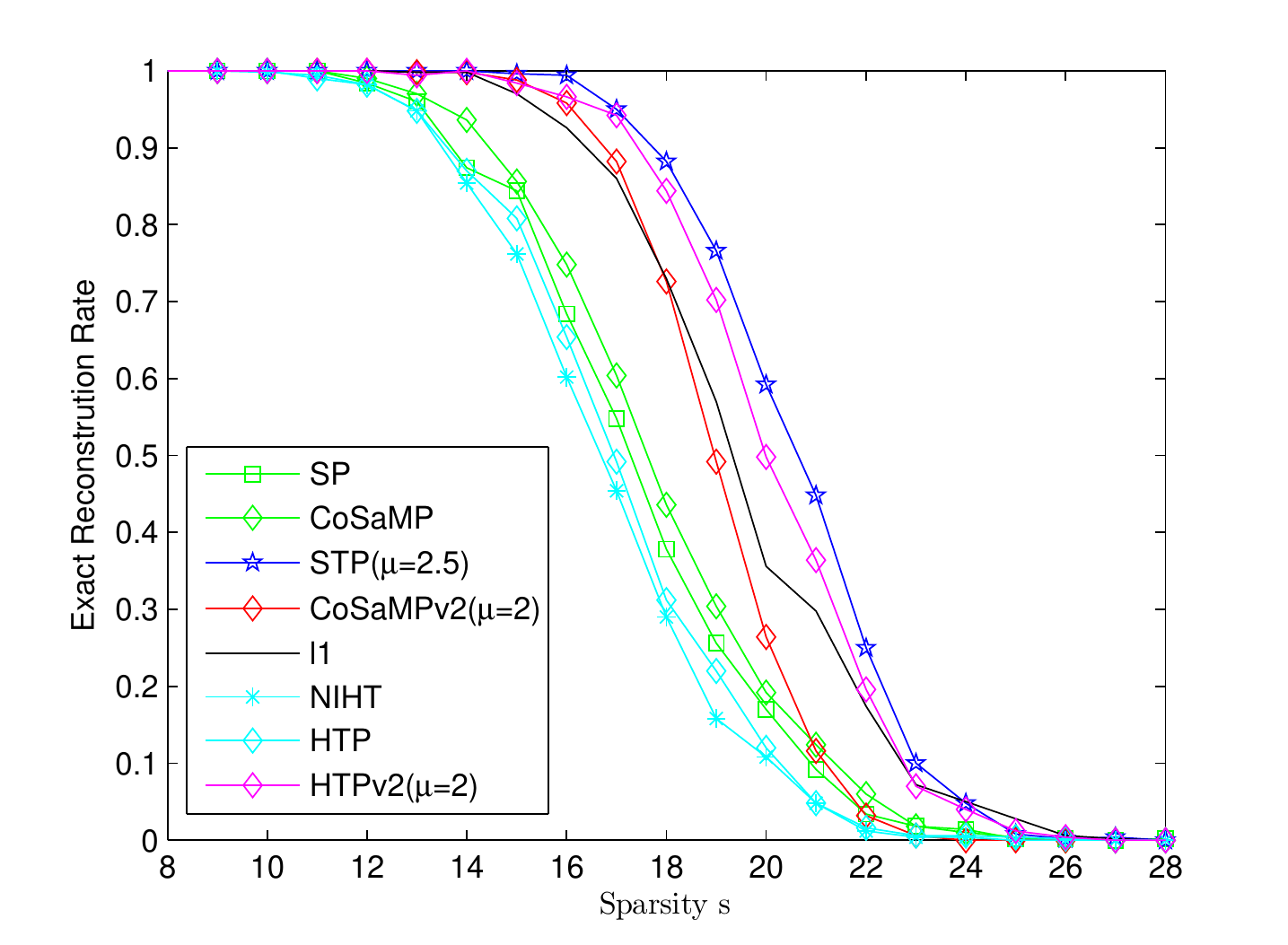}
}
\caption{The performance of CoSaMPv2 and HTPv2 under $100\times1000$ Gaussian measurement matrix with undersampling ratio $\tau=0.1$.}\label{fg:cosamp-htp-test}
\end{figure}
\begin{figure}[th]
\centering
\subfigure[Exact reconstruction rate of SAMPv2 and FBPv2 in the Gaussian signal case]{
\label{fg:samp-fbp-test:a}
\centering
\includegraphics[scale=0.5]{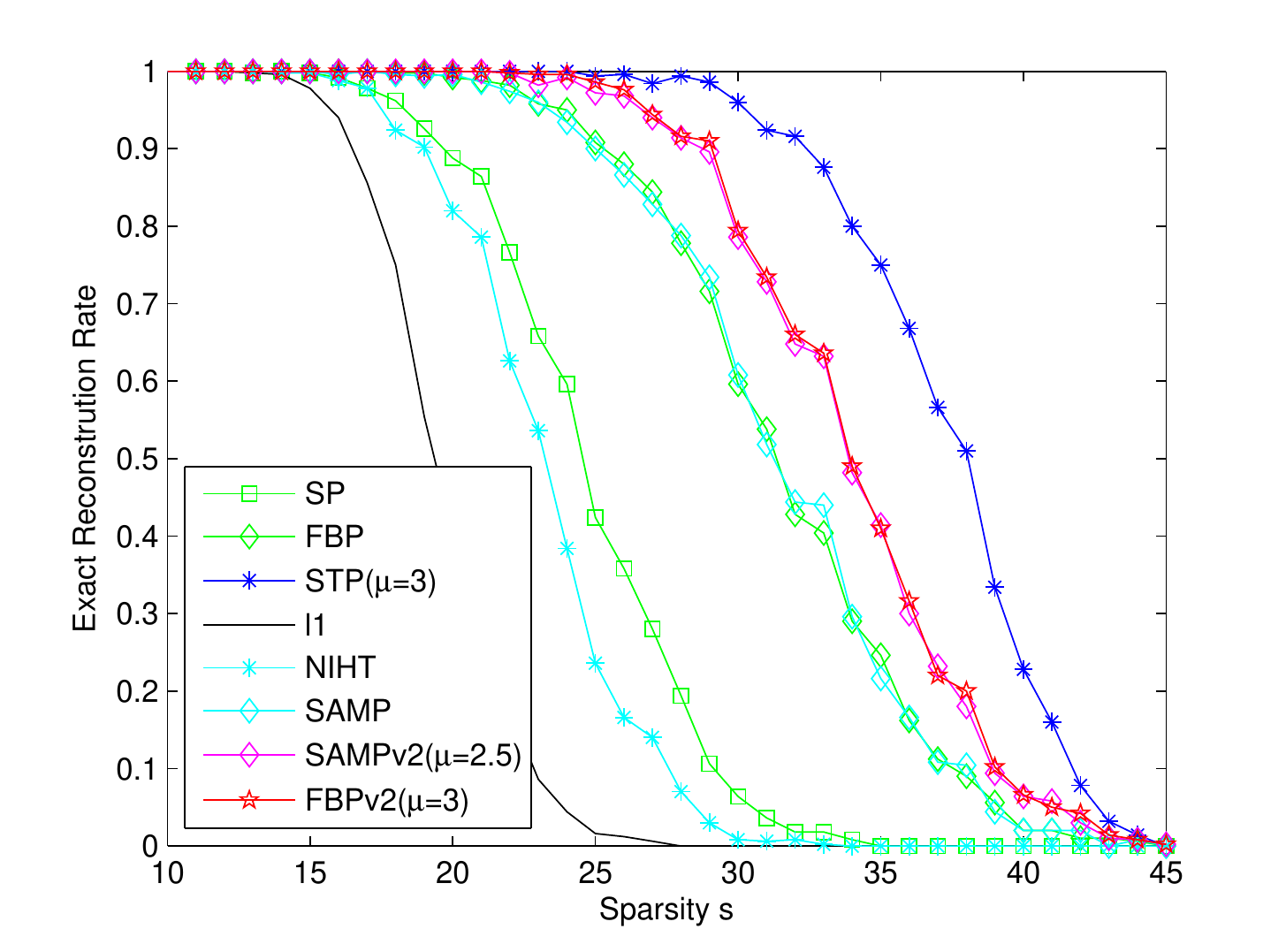}
}
\subfigure[Exact reconstruction rate of SAMPv2 and FBPv2 in the CARS signal case]{
\label{fg:samp-fbp-test:b}
\centering
\includegraphics[scale=0.5]{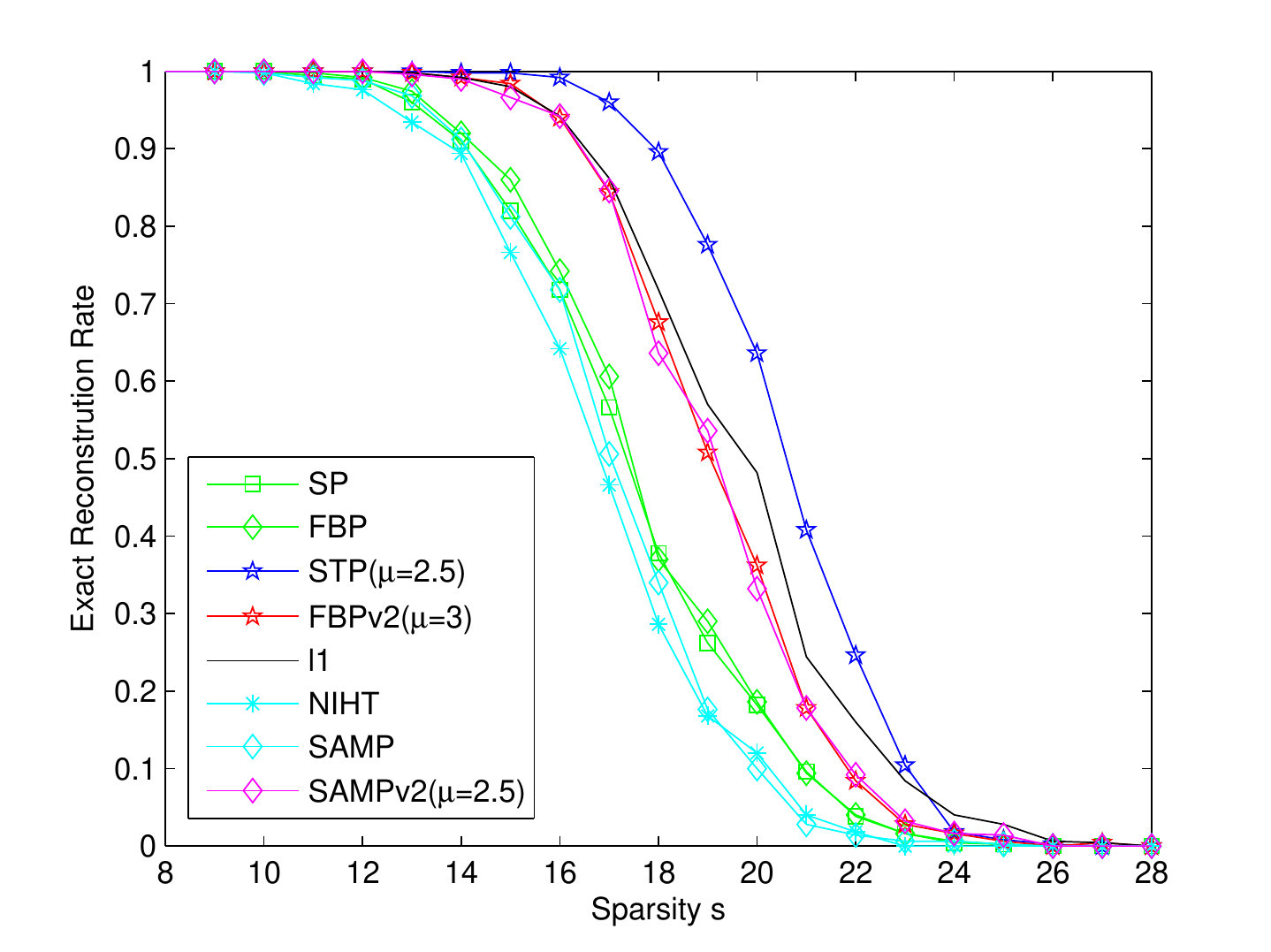}
}
\caption{The performance of SAMPv2 and FBPv2 under $100\times1000$ Gaussian measurement matrix with undersampling ratio $\tau=0.1$.}\label{fg:samp-fbp-test}

\end{figure}

The detailed descriptions of the original algorithms can be seen in \cite{karahanoglu2013compressed,needell2009cosamp,foucart2011hard,do2008sparsity}. In the following paragraph,
we only give the descriptions of the changes in CoSaMPv2, HTPv2, SAMPv2 and FBPv2 compared with the original algorithms as well as the settings in our simulations.

In each trail, we use a $100\times1000$ Gaussian measurement matrix and perform 500 independent trials respectively.
 In CoSaMPv2, we add the IHT-like identification step in step 6 and  set $\alpha=2$ in our simulations to imitate the initial version of CoSaMP in \cite{needell2009cosamp}
(the more general description of CoSaMP can be seen in \cite{tropp2010computational});
In HTPv2, steps 1 and 5 are IHT-like identification steps and the parameters $\mu$ and $\mu^{\prime}$ can be adjusted in practice. Let $|\v|_{\min}, |\v|_{\max}$ denote the smallest magnitude entries and the largest
entries in arbitrary vector $\v$. Due to $(\bPhi^{\! * }(\y-\bPhi\x^{n-1}))_{S^{n-1}}=\0$, when $\mu^{\prime}<\frac{|\x^{n-1}|_{\min}}{|\bPhi^{\! *}(\y-\bPhi\x^{n-1})|_{\max}}$,
in step 1 of HTPv2, $\tS^{n}$ will contain supp($\x^{n-1}$) at first, then HTPv2 degrades to
STP. In addition, if we set $\alpha=0,\mu^{\prime}=\mu=1$, HTPv2 degrades to HTP. In order to address the general case, we set $\alpha=1,\mu^{\prime}=1$ in our simulations.
In SAMPv2, we add the IHT-like identification step in steps 5 and 6 and set $\nu_0=2$ in our simulations; in FBPv2, we add the IHT-like identification steps 5 and 6
and set $\nu=20, \chi=18$.

The  four subfigures in Fig.  \ref{fg:cosamp-htp-test} and \ref{fg:samp-fbp-test} show their empirical performance. We show the performance of CoSaMPv2 and HTP v2 in Fig. \ref{fg:cosamp-htp-test:a},
\ref{fg:cosamp-htp-test:b} and then the performance of SAMPv2 and FBPv2 in Fig. \ref{fg:samp-fbp-test:a},
\ref{fg:samp-fbp-test:b}. In each figure, we display the curve of each algorithm with $\mu_*$ if it needs the parameter $\mu$.
We show that IHT-like identification step is a universal way to improve the empirical performance, but
so far, in all of these improvements, the improvement to SP, i.e., the proposed STP algorithm has the best empirical performance.

\section{\label{sec:conclusion}Conclusion}
In this paper, we proposed a new reconstruction algorithm for CS, termed subspace thresholding pursuit (STP). STP has a strong provable theoretical guarantee and good empirical performance.
It digs out the potential of iterative greedy algorithms further and displays an outstanding worst-case empirical performance which is better than the well-known $\ell_1$ minimization if the undersampling ratio is not very large. In addition, we proposed a method to improve the critical sparisity further if the undersampling ratio is large and  showed the universal significance of the idea in STP. Future works may focus on solving the inconsistency of the theoretical guarantee and empirical performance of the paper, e.g., STP with some parameter $\mu>1$ may have better empirical performance but worse theoretical guarantee than the one with $\mu=1$ in the current version.

\appendix
%

Considering the similarity to SP, our theoretical analysis for STP mainly follows the framework the authors developed in \cite{song2013improved}.
Before our derivations, we firstly introduce some lemmas which are mainly referenced or developed in \cite{song2013improved}.
\subsection{Some technical lemmas}
 The following two lemmas are used in the derivations of RIC related results.
\begin{lemma}[Consequences of the RIP]$ $
\label{lem:rip}
\begin{enumerate}
\item (Monotonicity \cite{candes2005decoding}) For any two positive integers $s\le s^{\prime}$, \quad $\delta_s\le\delta_{s^{\prime}}.$
\item For two vectors $\mathbf{u}, \mathbf{v}\in\mathbb{R}^{N}$, if $|$supp($\mathbf{u}$)$\cup$supp($\mathbf{v}$)$|$$\le t$, then
\begin{eqnarray}
        |\langle\mathbf{u}, (\mathbf{I}-\mu\mathbf{\Phi}^{\! *}\mathbf{\Phi})\mathbf{v}\rangle|\le(|\mu-1|+\mu\delta_t)\Vert\mathbf{u}\Vert_2 \Vert\mathbf{v}\Vert_2;\label{rip11}
\end{eqnarray}
      moreover, if $U\subseteq\{1,\dots,N\}$ and $|U \cup \text{supp}(\mathbf{v})$$|$$\le t$, then
\begin{eqnarray}
            \Vert((\mathbf{I}-\mu\mathbf{\Phi}^{\! *}\mathbf{\Phi})\mathbf{v})_U\Vert_2\le(|\mu-1|+\mu\delta_t)\Vert\mathbf{v}\Vert_2.\label{rip12}
\end{eqnarray}
We omit  the proofs of \eqref{rip11} and \eqref{rip12} here for their  similarity to the proofs of \cite[Lemma 1]{song2013improved}.
\end{enumerate}
\end{lemma}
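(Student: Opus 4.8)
The plan is to reduce both bounds to a single spectral estimate on the symmetric operator $\I-\mu\bPhi_T^{\! *}\bPhi_T$ over a small index set $T$, where the RIP pins down the spectrum directly. Since part~1 (monotonicity) is quoted from \cite{candes2005decoding}, I would only treat \eqref{rip11} and \eqref{rip12}.

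For \eqref{rip11}, put $T=\text{supp}(\u)\cup\text{supp}(\v)$, so $|T|\le t$ and both vectors are supported on $T$. Writing $\a,\b\in\bbR^{|T|}$ for their $T$-coordinate vectors and using $\bPhi\u=\bPhi_T\a$, $\bPhi\v=\bPhi_T\b$, one gets $\langle\u,(\I-\mu\bPhi^{\! *}\bPhi)\v\rangle=\langle\a,(\I-\mu\bPhi_T^{\! *}\bPhi_T)\b\rangle$, with $\I-\mu\bPhi_T^{\! *}\bPhi_T$ symmetric on $\bbR^{|T|}$. By Cauchy--Schwarz it then suffices to bound the operator norm $\Vert\I-\mu\bPhi_T^{\! *}\bPhi_T\Vert$. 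The $t$-order RIP forces every eigenvalue of $\bPhi_T^{\! *}\bPhi_T$ into $[1-\delta_t,1+\delta_t]$, hence every eigenvalue of $\I-\mu\bPhi_T^{\! *}\bPhi_T$ into $[\,1-\mu-\mu\delta_t,\,1-\mu+\mu\delta_t\,]$.

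The single computation that needs care --- and the only place the $|\mu-1|$ factor enters --- is verifying that the spectral radius of that interval equals $|\mu-1|+\mu\delta_t$ for every $\mu>0$, i.e. $\max\{|1-\mu-\mu\delta_t|,\,|1-\mu+\mu\delta_t|\}=|\mu-1|+\mu\delta_t$. I would split on $\mu<1$ versus $\mu\ge1$: when $\mu<1$ the right endpoint $1-\mu+\mu\delta_t$ has the larger magnitude, while for $\mu\ge1$ a triangle-inequality comparison shows the left endpoint $\mu-1+\mu\delta_t$ dominates; in both cases the bound is $|\mu-1|+\mu\delta_t$. Combined with Cauchy--Schwarz and $\Vert\a\Vert_2=\Vert\u\Vert_2$, $\Vert\b\Vert_2=\Vert\v\Vert_2$, this yields \eqref{rip11}.

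Finally, \eqref{rip12} reuses the same operator-norm estimate. Take $W=U\cup\text{supp}(\v)$ with $|W|\le t$; since $\v$ is supported on $W$, restricting to the $W$-coordinates turns $(\I-\mu\bPhi^{\! *}\bPhi)\v$ into $(\I-\mu\bPhi_W^{\! *}\bPhi_W)$ applied to those coordinates, whose $\ell_2$-norm is at most $(|\mu-1|+\mu\delta_t)\Vert\v\Vert_2$ by the spectral bound above. Restricting further to $U\subseteq W$ only discards coordinates and cannot increase the norm, so $\Vert((\I-\mu\bPhi^{\! *}\bPhi)\v)_U\Vert_2\le(|\mu-1|+\mu\delta_t)\Vert\v\Vert_2$, which is \eqref{rip12}. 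The whole argument is mechanical once the spectral-radius identity is in hand, so I expect that $\mu$-dependent case split to be the only genuine obstacle.
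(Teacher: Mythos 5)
Your proof is correct: reducing both \eqref{rip11} and \eqref{rip12} to the spectral bound $\Vert\I-\mu\bPhi_T^{\! *}\bPhi_T\Vert\le|\mu-1|+\mu\delta_t$ on the union support, via the identity $\max\{|1-\mu-\mu\delta_t|,\,|1-\mu+\mu\delta_t|\}=|\mu-1|+\mu\delta_t$ (valid for $\mu\ge 0$, the regime the paper uses), and then noting that restricting coordinates to $U$ cannot increase the $\ell_2$-norm, is exactly the standard argument. The paper itself omits this proof and defers to \cite[Lemma 1]{song2013improved}, which is the $\mu=1$ case proved by the same operator-norm route, so your argument is precisely the natural generalization the authors had in mind.
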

\vspace{0.1in}
\begin{lemma}[Noise perturbation in partial support \cite{foucart2011hard}]
\label{lem:noise}
For the general CS model $
\mathbf{y}=\mathbf{\Phi}\mathbf{x}_S+\mathbf{e}^{\prime}$ in (\ref{eq:general_model}), letting $U\subseteq\{1,\ldots,N\}$ and $|U|\le u$, we have
\begin{eqnarray}
\label{rip13}
\Vert(\mathbf{\Phi}^{\! *}\mathbf{e}^{\prime})_{U}\Vert_2\le\sqrt{1+\delta_{u}}\Vert\mathbf{e}^{\prime}\Vert_2.
\end{eqnarray}
\end{lemma}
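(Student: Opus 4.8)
The statement is a direct structural consequence of the restricted isometry property, so the plan is short. First I would observe that restricting the vector $\bPhi^{\! *}\mathbf{e}^{\prime}$ to the coordinates in $U$ and then measuring it in the Euclidean norm is the same as applying the submatrix $\bPhi_U^{\! *}$ to $\mathbf{e}^{\prime}$, namely $\Vert(\bPhi^{\! *}\mathbf{e}^{\prime})_U\Vert_2=\Vert\bPhi_U^{\! *}\mathbf{e}^{\prime}\Vert_2$, since the entries of $\bPhi^{\! *}\mathbf{e}^{\prime}$ indexed by $U$ are exactly the entries of $\bPhi_U^{\! *}\mathbf{e}^{\prime}$. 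Note that this observation, and hence the whole lemma, does not use the structure of the CS model (\ref{eq:general_model}) at all: $\mathbf{e}^{\prime}$ may be treated as an arbitrary vector in $\mathbb{R}^m$.

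Next I would pass to the spectral (operator) norm,
\[
\Vert\bPhi_U^{\! *}\mathbf{e}^{\prime}\Vert_2\le\Vert\bPhi_U^{\! *}\Vert_{2\to 2}\,\Vert\mathbf{e}^{\prime}\Vert_2,
\]
and recall that $\Vert\bPhi_U^{\! *}\Vert_{2\to 2}=\Vert\bPhi_U\Vert_{2\to 2}=\sqrt{\lambda_{\max}(\bPhi_U^{\! *}\bPhi_U)}$. It then remains only to control this largest eigenvalue. For any vector $\mathbf{z}$ supported on $U$ one has $\Vert\bPhi\mathbf{z}\Vert_2^2=\mathbf{z}_U^{\! *}\bPhi_U^{\! *}\bPhi_U\mathbf{z}_U$, and since $|U|\le u$ the vector $\mathbf{z}$ is $u$-sparse, so the upper inequality in Definition \ref{def:rip} gives $\mathbf{z}_U^{\! *}\bPhi_U^{\! *}\bPhi_U\mathbf{z}_U\le(1+\delta_u)\Vert\mathbf{z}_U\Vert_2^2$. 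By the variational characterization of the maximum eigenvalue this means $\lambda_{\max}(\bPhi_U^{\! *}\bPhi_U)\le 1+\delta_u$, whence $\Vert\bPhi_U^{\! *}\Vert_{2\to 2}\le\sqrt{1+\delta_u}$ and the bound (\ref{rip13}) follows.

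There is essentially no obstacle here: the only points deserving a line of care are the identity $\Vert(\bPhi^{\! *}\mathbf{e}^{\prime})_U\Vert_2=\Vert\bPhi_U^{\! *}\mathbf{e}^{\prime}\Vert_2$ and the translation of the upper RIP inequality into a statement about the top eigenvalue of $\bPhi_U^{\! *}\bPhi_U$ (equivalently, the top singular value of $\bPhi_U$). If one prefers to state the cardinality hypothesis in terms of a fixed sparsity order rather than $u=|U|$, the monotonicity of the restricted isometry constants (Lemma \ref{lem:rip}, part~1) allows one to replace $\delta_u$ by $\delta_{u'}$ for any $u'\ge u$ without altering the argument.
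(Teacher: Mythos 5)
Your proof is correct. The paper does not prove this lemma at all --- it imports it directly from \cite{foucart2011hard} --- and your argument (reduce $\Vert(\bPhi^{\! *}\mathbf{e}^{\prime})_U\Vert_2$ to $\Vert\bPhi_U^{\! *}\mathbf{e}^{\prime}\Vert_2$, then bound the top singular value of $\bPhi_U$ by $\sqrt{1+\delta_u}$ via the upper RIP inequality) is exactly the standard argument used in that reference, so there is nothing to reconcile.
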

The next lemma introduces a simple inequality introduced in \cite{song2013improved} which is useful in our derivations.
\begin{lemma}[\cite{song2013improved}]
\label{lem:cauchy}
For nonnegative numbers $a,b,c,d,x,y$,
\begin{align}
\label{lem3}
(ax+by)^2+(cx+dy)^2\le (\sqrt{a^2+c^2}x+(b+d)y)^2.
\end{align}
\end{lemma}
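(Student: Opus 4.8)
The plan is to reduce the claim to a direct expansion followed by the Cauchy--Schwarz inequality, since both sides are quadratic forms in $x$ and $y$ with nonnegative data. First I would expand the left-hand side to obtain
\begin{align}
(ax+by)^2+(cx+dy)^2=(a^2+c^2)x^2+2(ab+cd)xy+(b^2+d^2)y^2,
\end{align}
and the right-hand side to obtain
\begin{align}
\left(\sqrt{a^2+c^2}\,x+(b+d)y\right)^2=(a^2+c^2)x^2+2\sqrt{a^2+c^2}\,(b+d)xy+(b+d)^2y^2.
\end{align}
The coefficients of $x^2$ match exactly, so subtracting leaves the difference of the right-hand side and the left-hand side equal to
\begin{align}
2\left(\sqrt{a^2+c^2}\,(b+d)-(ab+cd)\right)xy+2bd\,y^2.
\end{align}

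Next I would argue that both surviving terms are nonnegative. Since $b,d,y\ge 0$, the term $2bd\,y^2$ is nonnegative immediately. For the cross term, since $x,y\ge 0$ it suffices to show that the bracketed coefficient is nonnegative, i.e. that $ab+cd\le\sqrt{a^2+c^2}\,(b+d)$. The key step is to chain two elementary bounds: by the Cauchy--Schwarz inequality applied to the vectors $(a,c)$ and $(b,d)$ one has $ab+cd\le\sqrt{a^2+c^2}\,\sqrt{b^2+d^2}$, and since $b,d\ge 0$ one has $\sqrt{b^2+d^2}\le b+d$. Composing these two estimates gives $ab+cd\le\sqrt{a^2+c^2}\,(b+d)$, which is exactly the coefficient inequality needed.

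There is no genuine obstacle here: the only nontrivial ingredient is recognizing that the cross-term coefficient is controlled by replacing the $\ell_2$ quantity $\sqrt{b^2+d^2}$ with the larger $\ell_1$ quantity $b+d$ after invoking Cauchy--Schwarz. Once the coefficient inequality is established, the nonnegativity of $x$ and $y$ makes the whole difference nonnegative, and the claimed inequality \eqref{lem3} follows. (The nonnegativity hypotheses on all six variables are used only in these last two monotonicity steps; the expansion itself holds without them.)
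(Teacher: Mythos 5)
Your proof is correct. Note that the paper itself offers no proof of this lemma at all: it is stated as a citation to \cite{song2013improved} and used as a black box, so there is nothing internal to compare your argument against. Your route---expanding both quadratics, observing that the $x^2$ coefficients cancel, and reducing the remaining cross-term comparison to the chain $ab+cd\le\sqrt{a^2+c^2}\,\sqrt{b^2+d^2}\le\sqrt{a^2+c^2}\,(b+d)$ via Cauchy--Schwarz and the $\ell_2\le\ell_1$ bound for nonnegative $b,d$---is complete and elementary, and it correctly isolates where each nonnegativity hypothesis is actually used (only $b,d\ge 0$ for the norm comparison and the $2bd\,y^2$ term, and $x,y\ge 0$ to keep the cross term nonnegative). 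This is a self-contained justification of exactly the kind the paper delegates to its reference.
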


Consider the general CS model $
\mathbf{y}=\mathbf{\Phi}\mathbf{x}_S+\mathbf{e}^{\prime}$ in (\ref{eq:general_model}). Let $T\subseteq \{1,2,\ldots,N\}$ and $|T|=t$. Let $\mathbf{z}_{p}$ be the solution of the least squares problem $\mbox{arg}\min_{\mathbf{z}\in\mathbb{R}^{N}}\{\Vert\mathbf{y}-
\mathbf{\Phi}\mathbf{z}\Vert_2, \;\text{supp}(\mathbf{z}$)$\subseteq T\}$. The least squares problem has the following orthogonal properties introduced in \cite{song2013improved}.

\begin{lemma}[Consequences for orthogonality by the RIP \cite{song2013improved}]
\label{lem:orthogonality-rip}
If $\delta_{s+t}<1$,
\begin{equation}
\Vert(\mathbf{x}_S-\mathbf{z}_{p})_T\Vert_2\le\delta_{s+t}\Vert\mathbf{x}_S-\mathbf{z}_{p}\Vert_2+\sqrt{1+\delta_{t}}\Vert\mathbf{e}^{\prime}\Vert_2\label{eq:orthogonality-rip1}
\end{equation}
and
\begin{equation}
\Vert\mathbf{x}_S-\mathbf{z}_{p}\Vert_2\le\sqrt{\dfrac{1}{1-\delta_{s+t}^2}}
\Vert(\mathbf{x}_S)_{\overline{T}}\Vert_2+\dfrac{\sqrt{1+\delta_{t}}}
{1-\delta_{s+t}}\Vert\mathbf{e}^{\prime}\Vert_2.\label{eq:orthogonality-rip2}
\end{equation}
Moveover, if $t>s$, define $T_{\nabla}$:=\{The indices of the $t-s$ smallest magnitude entries of $\mathbf{z}_{p}$ in $T$\}, we have
\begin{eqnarray}
\Vert(\mathbf{x}_S)_{T_{\nabla}}\Vert_2
\le\sqrt{2}\delta_{s+t}\Vert\mathbf{x}_S
-\mathbf{z}_{p}\Vert_2+\sqrt{2(1+\delta_{t})}\Vert\mathbf{e}^{\prime}\Vert_2.
\label{eq:orthogonality-rip3}
\end{eqnarray}
\end{lemma}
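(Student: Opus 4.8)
The plan is to handle the three inequalities in turn, using \eqref{eq:orthogonality-rip1} as the workhorse that feeds the other two. Throughout I would write $\mathbf{w}:=\mathbf{x}_S-\mathbf{z}_{p}$ and note $\text{supp}(\mathbf{w})\subseteq S\cup T$, so that $|T\cup\text{supp}(\mathbf{w})|\le s+t$. The single structural fact I would exploit is the optimality (normal) condition for the least-squares solution: since $\text{supp}(\mathbf{z}_{p})\subseteq T$, the residual $\mathbf{y}-\mathbf{\Phi}\mathbf{z}_{p}$ is orthogonal to the columns of $\mathbf{\Phi}_T$, i.e. $(\mathbf{\Phi}^{\!*}(\mathbf{y}-\mathbf{\Phi}\mathbf{z}_{p}))_T=\mathbf{0}$. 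Substituting $\mathbf{y}=\mathbf{\Phi}\mathbf{x}_S+\mathbf{e}^{\prime}$ then gives $(\mathbf{\Phi}^{\!*}\mathbf{\Phi}\mathbf{w})_T=-(\mathbf{\Phi}^{\!*}\mathbf{e}^{\prime})_T$, which is the only place the definition of $\mathbf{z}_{p}$ enters.

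For \eqref{eq:orthogonality-rip1} I would use the telescoping identity $\mathbf{w}_T=((\mathbf{I}-\mathbf{\Phi}^{\!*}\mathbf{\Phi})\mathbf{w})_T+(\mathbf{\Phi}^{\!*}\mathbf{\Phi}\mathbf{w})_T=((\mathbf{I}-\mathbf{\Phi}^{\!*}\mathbf{\Phi})\mathbf{w})_T-(\mathbf{\Phi}^{\!*}\mathbf{e}^{\prime})_T$ and then apply the triangle inequality. The first term is bounded by $\delta_{s+t}\|\mathbf{w}\|_2$ via \eqref{rip12} with $\mu=1$ and order $s+t$ (legitimate because $|T\cup\text{supp}(\mathbf{w})|\le s+t$), and the second by $\sqrt{1+\delta_t}\|\mathbf{e}^{\prime}\|_2$ via Lemma \ref{lem:noise} with $|T|=t$. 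Adding the two bounds yields \eqref{eq:orthogonality-rip1} at once.

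For \eqref{eq:orthogonality-rip2} the idea is a Pythagorean split. Since $(\mathbf{z}_{p})_{\overline{T}}=\mathbf{0}$, we have $\mathbf{w}_{\overline{T}}=(\mathbf{x}_S)_{\overline{T}}$, hence $\|\mathbf{w}\|_2^2=\|\mathbf{w}_T\|_2^2+\|(\mathbf{x}_S)_{\overline{T}}\|_2^2$. Substituting \eqref{eq:orthogonality-rip1} into the first summand turns this into a quadratic inequality of the form $(1-\delta_{s+t}^2)\|\mathbf{w}\|_2^2-2\delta_{s+t}\sqrt{1+\delta_t}\,\|\mathbf{e}^{\prime}\|_2\,\|\mathbf{w}\|_2-\big((1+\delta_t)\|\mathbf{e}^{\prime}\|_2^2+\|(\mathbf{x}_S)_{\overline{T}}\|_2^2\big)\le 0$ in the single unknown $\|\mathbf{w}\|_2$. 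Solving for the nonnegative root and then using $\sqrt{p+q}\le\sqrt{p}+\sqrt{q}$ to separate the $\|(\mathbf{x}_S)_{\overline{T}}\|_2$ and $\|\mathbf{e}^{\prime}\|_2$ contributions produces exactly the coefficients $1/\sqrt{1-\delta_{s+t}^2}$ and $\sqrt{1+\delta_t}/(1-\delta_{s+t})$ claimed; this is routine once the quadratic is set up, since $\delta_{s+t}<1$ guarantees a positive leading coefficient.

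The inequality \eqref{eq:orthogonality-rip3} is the one I expect to be the crux, because the naive triangle inequality $\|(\mathbf{x}_S)_{T_{\nabla}}\|_2\le\|\mathbf{w}_{T_{\nabla}}\|_2+\|(\mathbf{z}_{p})_{T_{\nabla}}\|_2$ only yields a factor $2$, whereas the statement demands $\sqrt{2}$. The refinement I would use is a rearrangement/matching argument. Put $A:=T_{\nabla}\cap S$; since $\mathbf{x}_S$ vanishes off $S$, $\|(\mathbf{x}_S)_{T_{\nabla}}\|_2=\|(\mathbf{x}_S)_A\|_2\le\|\mathbf{w}_A\|_2+\|(\mathbf{z}_{p})_A\|_2$. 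Because $|T\cap\overline{S}|\ge t-s=|T_{\nabla}|$, one can extract $A'\subseteq(T\cap\overline{S})\setminus T_{\nabla}$ with $|A'|=|A|$; since $T_{\nabla}$ collects the \emph{smallest} magnitude entries of $\mathbf{z}_{p}$ in $T$ and $A'$ lies outside $T_{\nabla}$, this gives the entrywise comparison $\|(\mathbf{z}_{p})_A\|_2\le\|(\mathbf{z}_{p})_{A'}\|_2$, and because $A'\subseteq\overline{S}$ forces $(\mathbf{z}_{p})_{A'}=-\mathbf{w}_{A'}$, the right side equals $\|\mathbf{w}_{A'}\|_2$. Thus $\|(\mathbf{x}_S)_{T_{\nabla}}\|_2\le\|\mathbf{w}_A\|_2+\|\mathbf{w}_{A'}\|_2$ with $A,A'$ disjoint subsets of $T$; applying $p+q\le\sqrt{2}\sqrt{p^2+q^2}$ together with $\|\mathbf{w}_A\|_2^2+\|\mathbf{w}_{A'}\|_2^2\le\|\mathbf{w}_T\|_2^2$ gives $\|(\mathbf{x}_S)_{T_{\nabla}}\|_2\le\sqrt{2}\,\|\mathbf{w}_T\|_2$, and one final invocation of \eqref{eq:orthogonality-rip1} distributes the $\sqrt{2}$ over the two terms to match \eqref{eq:orthogonality-rip3} exactly. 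The only delicate points are the counting check that the matched set $A'$ always exists and the verification that the defining ``smallest-entries'' property of $T_{\nabla}$ legitimately licenses the comparison $\|(\mathbf{z}_{p})_A\|_2\le\|(\mathbf{z}_{p})_{A'}\|_2$.
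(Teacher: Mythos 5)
Your proof is correct. Note that the paper itself does not prove this lemma at all --- it is imported verbatim from \cite{song2013improved} ("orthogonal properties introduced in \cite{song2013improved}"), so there is no in-paper argument to compare against; what can be said is that your reconstruction is valid and self-contained. Each of the three steps checks out. For \eqref{eq:orthogonality-rip1}, the normal equation $(\mathbf{\Phi}^{\!*}\mathbf{\Phi}\mathbf{w})_T=-(\mathbf{\Phi}^{\!*}\mathbf{e}^{\prime})_T$ plus \eqref{rip12} with $\mu=1$ and Lemma \ref{lem:noise} gives exactly the stated bound. For \eqref{eq:orthogonality-rip2}, solving your quadratic $(1-\delta_{s+t}^2)a^2-2\delta_{s+t}Ea-(E^2+X^2)\le 0$ (with $a=\Vert\mathbf{w}\Vert_2$, $E=\sqrt{1+\delta_t}\Vert\mathbf{e}^{\prime}\Vert_2$, $X=\Vert(\mathbf{x}_S)_{\overline{T}}\Vert_2$) yields $a\le\bigl(\delta_{s+t}E+\sqrt{E^2+(1-\delta_{s+t}^2)X^2}\bigr)/(1-\delta_{s+t}^2)$, and subadditivity of the square root produces precisely the coefficients $1/\sqrt{1-\delta_{s+t}^2}$ and $\sqrt{1+\delta_t}/(1-\delta_{s+t})$. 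For \eqref{eq:orthogonality-rip3}, the two delicate points you flagged both hold: the counting works because $|(T\cap\overline{S})\setminus T_{\nabla}|=(t-|T\cap S|)-((t-s)-|A|)=s-|T\cap S|+|A|\ge|A|$, and the comparison $\Vert(\mathbf{z}_p)_A\Vert_2\le\Vert(\mathbf{z}_p)_{A'}\Vert_2$ is licensed entrywise under any bijection, since every entry of $\mathbf{z}_p$ on $T\setminus T_{\nabla}$ dominates in magnitude every entry on $T_{\nabla}$. Restricting first to $A=T_{\nabla}\cap S$ (which loses nothing since $\mathbf{x}_S$ vanishes on $T_{\nabla}\setminus S$) is exactly what guarantees the disjointness of $A$ and $A'$ needed for $\Vert\mathbf{w}_A\Vert_2+\Vert\mathbf{w}_{A'}\Vert_2\le\sqrt{2}\Vert\mathbf{w}_T\Vert_2$; a cruder choice (e.g.\ comparing $T_{\nabla}$ against an arbitrary $(t-s)$-subset of $T\setminus S$, which may overlap $T_{\nabla}$) would only give the factor $2$ you mentioned, so this refinement is genuinely needed to reach the stated constant $\sqrt{2}$.
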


Throughout the paper, we use the notation $(i)$ stacked over an inequality sign to indicate
that the inequality follows from the expression $(i)$ in the paper.

Before our analysis, we emphasize again that unless stated, we set $\alpha=1$ in this paper.

In Alg. \ref{alg:stp}, the first two steps and the last step are identical with the corresponding steps of SP in Alg. \ref{alg:sp}, so the property of the identification step for SP is also suitable for STP. Similar to \cite[Lemma 6]{song2013improved}, we have a lemma for STP as follows.

\begin{lemma}In the steps 1 and 2 of STP, we have
\label{lem:identification-sp}
\[\Vert(\mathbf{x}_S)_{\overline{\tilde{S}^{n}}}\Vert_2\le\sqrt{2}\delta_{3s}\Vert\mathbf{x}_S- \mathbf{x}^{n-1}\Vert_2+\sqrt{2(1+\delta_{2s})}\Vert\mathbf{e}^{\prime}\Vert_2.
\]
\end{lemma}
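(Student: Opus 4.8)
The plan is to adapt the standard subspace/compressive-sampling identification estimate to STP's first two steps, following the framework of \cite{song2013improved}. Write $\mathbf{c}:=\bPhi^{\! *}(\y-\bPhi\x^{n-1})=\bPhi^{\! *}\bPhi\r+\bPhi^{\! *}\e^{\prime}$ with $\r:=\x_S-\x^{n-1}$, and set $S:=\supp(\x_S)$, $W:=S\cap\overline{\tilde{S}^{n}}$ (the true-support indices missed by the merge) and $D:=\Delta S\setminus S$ (the wrongly selected indices). Since $\x^{n-1}$ is a least-squares solution on $S^{n-1}$, the residual $\y-\bPhi\x^{n-1}$ is orthogonal to the columns of $\bPhi_{S^{n-1}}$, so $\mathbf{c}_{S^{n-1}}=\0$; hence we may take $\Delta S\cap S^{n-1}=\emptyset$, which forces $\r_D=-\x^{n-1}_D=\0$ and also gives $(\x_S)_{\overline{\tilde{S}^{n}}}=(\x_S)_W=\r_W$. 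It therefore suffices to bound $\|\r_W\|_2$.

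First I would establish the combinatorial fact $|W|\le|D|$: partitioning $S=(S\cap S^{n-1})\sqcup(S\cap\Delta S)\sqcup W$ and $\Delta S=(S\cap\Delta S)\sqcup D$, the relations $|S\cap\Delta S|+|W|\le|S|\le s=|S\cap\Delta S|+|D|$ give $|W|\le|D|$. Because $\Delta S$ collects the $s$ largest magnitudes of $\mathbf{c}$, every entry indexed by $W$ is dominated in magnitude by every entry indexed by $\Delta S\supseteq D$; combined with $|W|\le|D|$ this yields $\|\mathbf{c}_W\|_2\le\|\mathbf{c}_D\|_2$ by a direct min/max comparison. Next I would expand both sides in terms of $\r$ and $\e^{\prime}$: on $W$ one has $\mathbf{c}_W=\r_W+((\bPhi^{\! *}\bPhi-\I)\r)_W+(\bPhi^{\! *}\e^{\prime})_W$, while on $D$ the term $\r_D$ vanishes, so $\mathbf{c}_D=((\bPhi^{\! *}\bPhi-\I)\r)_D+(\bPhi^{\! *}\e^{\prime})_D$. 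The triangle inequality together with $\|\mathbf{c}_W\|_2\le\|\mathbf{c}_D\|_2$ then gives
\[
\|\r_W\|_2\le\|\mathbf{c}_D\|_2+\|((\bPhi^{\! *}\bPhi-\I)\r)_W\|_2+\|(\bPhi^{\! *}\e^{\prime})_W\|_2 .
\]

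The crucial point—and the step I expect to be the main obstacle—is to combine the contributions from the disjoint sets $D$ and $W$ through a sum of squares rather than a naive triangle inequality, so as to recover the constant $\sqrt{2}\,\delta_{3s}$ in place of $2\delta_{3s}$. Writing $p,q$ for the two $\|(\bPhi^{\! *}\bPhi-\I)\r\|_2$-terms on $D$ and $W$ and $p',q'$ for the two $\|\bPhi^{\! *}\e^{\prime}\|_2$-terms, the disjointness of $D$ and $W$ lets me apply the RIP consequence \eqref{rip12} (with $\mu=1$) to the single index set $U=D\cup W$, using that $(D\cup W)\cup\supp(\r)\subseteq D\cup S\cup S^{n-1}$ has cardinality at most $3s$, to obtain $p^2+q^2\le\delta_{3s}^2\|\r\|_2^2$; likewise Lemma~\ref{lem:noise} applied to $U=D\cup W$ (of size at most $2s$) gives $p'^2+q'^2\le(1+\delta_{2s})\|\e^{\prime}\|_2^2$. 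Finally the elementary inequality $p+q\le\sqrt{2(p^2+q^2)}$, and the analogous one for $p',q'$, converts the previous display into $\|\r_W\|_2\le(p+q)+(p'+q')\le\sqrt{2}\,\delta_{3s}\|\r\|_2+\sqrt{2(1+\delta_{2s})}\|\e^{\prime}\|_2$, which is exactly the asserted bound after substituting $\r=\x_S-\x^{n-1}$. The only genuine care required is the support-size bookkeeping that makes the orders $3s$ and $2s$ (rather than $4s$) appear, and the verification that the least-squares orthogonality really does make $\r_D$ vanish.
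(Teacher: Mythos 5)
Your proof is correct and is essentially the argument the paper relies on: the paper itself gives no proof of this lemma (it defers to \cite{song2013improved}), and your chain---least-squares orthogonality giving $(\bPhi^{\!*}(\y-\bPhi\x^{n-1}))_{S^{n-1}}=\0$, the top-$s$ selection comparison between the missed set $W$ and the wrongly selected set $D$, combination of the two disjoint-set terms with the $\sqrt{2}$ factor, and the RIP consequences at orders $3s$ and $2s$---is exactly the standard identification argument, mirroring step for step the paper's own proof of Lemma \ref{lem:iht-identification} for the IHT-like identification. The only informality is ``we may take $\Delta S\cap S^{n-1}=\emptyset$'': strictly the algorithm's tie-breaking might select zero-valued entries from $S^{n-1}$, but in that degenerate case every unselected entry of $\bPhi^{\!*}(\y-\bPhi\x^{n-1})$ is zero, so $\r_W$ is bounded by the two $W$-terms alone and the claimed inequality holds a fortiori; hence the reduction is harmless.
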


In Alg. \ref{alg:stp}, the step 6, i.e., the IHT-like identification step also has similar property with steps 1 and 2, which has been developed in Foucart \cite{foucart2011hard}  when $\mu=1$. In our discussion, we generalize the result in \cite{foucart2011hard} to the more general case with $\mu\ge0$.

\begin{lemma}[IHT-like Identification] In the step 6 of STP, we have
\label{lem:iht-identification}
\begin{eqnarray*}
&&\Vert(\mathbf{x}_S)_{\overline{S^{n}}}\Vert_2\nonumber\\
&&\le\sqrt{2}(|\mu-1|+\mu\delta_{3s})\Vert\mathbf{x}_S- \u^{n}\Vert_2+\sqrt{2(1+\delta_{2s})}\mu\Vert\mathbf{e}^{\prime}\Vert_2.
\end{eqnarray*}
\end{lemma}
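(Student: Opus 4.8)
The plan is to mimic Foucart's analysis of the hard-thresholding step in HTP, but carrying the weight $\mu$ through the operator $\mathbf{I}-\mu\bPhi^{\! *}\bPhi$. Write $\mathbf{w}=\u^{n}+\mu\bPhi^{\! *}(\y-\bPhi\u^{n})$ for the vector whose $s$ largest magnitude entries define $S^{n}$. First I would record the elementary selection inequality: since $S^{n}$ indexes the $s$ largest entries of $\mathbf{w}$ and $|S|\le s$, the cardinality bound $|S\setminus S^{n}|\le|S^{n}\setminus S|$ (a consequence of $|S|\le|S^{n}|=s$), together with the fact that each entry of $\mathbf{w}$ outside $S^{n}$ has magnitude no larger than any entry inside $S^{n}$, gives $\Vert\mathbf{w}_{S\setminus S^{n}}\Vert_2\le\Vert\mathbf{w}_{S^{n}\setminus S}\Vert_2$.

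Next I would pass to the residual $\mathbf{r}=\mathbf{x}_S-\mathbf{w}$. On $S\setminus S^{n}$ the vector $\mathbf{x}_S$ is present, while on $S^{n}\setminus S$ it vanishes, so $\mathbf{w}_{S\setminus S^{n}}=(\mathbf{x}_S)_{S\setminus S^{n}}-\mathbf{r}_{S\setminus S^{n}}$ and $\mathbf{w}_{S^{n}\setminus S}=-\mathbf{r}_{S^{n}\setminus S}$. Substituting into the selection inequality and invoking the triangle inequality yields $\Vert(\mathbf{x}_S)_{S\setminus S^{n}}\Vert_2\le\Vert\mathbf{r}_{S\setminus S^{n}}\Vert_2+\Vert\mathbf{r}_{S^{n}\setminus S}\Vert_2$. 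Because $S\setminus S^{n}$ and $S^{n}\setminus S$ are disjoint, the sum of the two norms is at most $\sqrt{2}\,\Vert\mathbf{r}_W\Vert_2$ with $W=S\triangle S^{n}$, and since $(\mathbf{x}_S)_{\overline{S^{n}}}=(\mathbf{x}_S)_{S\setminus S^{n}}$ I obtain $\Vert(\mathbf{x}_S)_{\overline{S^{n}}}\Vert_2\le\sqrt{2}\,\Vert\mathbf{r}_W\Vert_2$.

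Finally I would expand the residual. Using $\y=\bPhi\mathbf{x}_S+\mathbf{e}^{\prime}$ gives $\mathbf{r}=(\mathbf{I}-\mu\bPhi^{\! *}\bPhi)(\mathbf{x}_S-\u^{n})-\mu\bPhi^{\! *}\mathbf{e}^{\prime}$, hence $\mathbf{r}_W=\bigl((\mathbf{I}-\mu\bPhi^{\! *}\bPhi)(\mathbf{x}_S-\u^{n})\bigr)_W-\mu(\bPhi^{\! *}\mathbf{e}^{\prime})_W$. The first piece is controlled by \eqref{rip12} of Lemma \ref{lem:rip} and the second by \eqref{rip13} of Lemma \ref{lem:noise}; feeding both into $\Vert(\mathbf{x}_S)_{\overline{S^{n}}}\Vert_2\le\sqrt{2}\,\Vert\mathbf{r}_W\Vert_2$ and distributing $\sqrt{2}$ reproduces exactly the claimed bound, with the factor $|\mu-1|+\mu\delta_{3s}$ on the signal error and $\sqrt{2(1+\delta_{2s})}\,\mu$ on the noise.

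The step that needs the most care is the bookkeeping of the RIC orders. For the operator term I must choose $t\ge|W\cup\supp(\mathbf{x}_S-\u^{n})|$; since $W\subseteq S\cup S^{n}$ and $\supp(\mathbf{x}_S-\u^{n})\subseteq S\cup U^{n}$, with each of $S$, $S^{n}$, $U^{n}$ of size at most $s$, the relevant index set lies in $S\cup S^{n}\cup U^{n}$, so $t=3s$ suffices and \eqref{rip12} yields the factor $|\mu-1|+\mu\delta_{3s}$. For the noise term only $|W|=|S\triangle S^{n}|\le 2s$ enters, so \eqref{rip13} contributes $\sqrt{1+\delta_{2s}}$. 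Keeping these two orders straight---$3s$ for the restricted-isometry term and $2s$ for the noise term---is the only genuinely delicate point; the generalization to arbitrary $\mu\ge 0$ is then automatic, since it is absorbed entirely into the $(|\mu-1|+\mu\delta_{t})$ constant supplied by Lemma \ref{lem:rip}, and everything else is the standard best-$s$-term comparison.
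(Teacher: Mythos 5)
Your proposal is correct and follows essentially the same route as the paper's proof: the selection inequality restricted to the symmetric difference $S\triangle S^{n}$, the $\sqrt{2}$ consolidation of the two disjoint pieces, and then Lemma \ref{lem:rip} (with order $3s$, since the relevant indices lie in $S\cup S^{n}\cup U^{n}$) for the operator term and Lemma \ref{lem:noise} (with order $2s$) for the noise term. The only cosmetic difference is that you apply the $\sqrt{2}$ step to the full residual $\mathbf{r}$ before splitting it into operator and noise parts, whereas the paper splits first and applies the consolidation to each part separately; the resulting bound is identical.
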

\begin{proof}
  In the step 6 of the $n$-th iteration, $S^n$ is the set of the $s$ indices corresponding to the  $s$ largest magnitude entries in $ \u^n+\mu\bPhi^{\! *} (\y-\bPhi\u^n)$. Thus,
\begin{eqnarray}
&&\|(\u^n+\mu\bPhi^{\! *} (\y-\bPhi\u^n))_S\|_2\nonumber\\
&&\quad\quad\le\|(\u^n+\mu\bPhi^{\! *} (\y-\bPhi\u^n))_{S^n}\|_2.
\end{eqnarray}
Removing the common coordinates in $S \cap S^n$ and noticing that $\mathbf{y}=\mathbf{\Phi}\mathbf{x}_S+\mathbf{e}^{\prime}$, we have
\begin{eqnarray}
&&\|(\u^n+\mu\bPhi^{\! *}(\y-\bPhi\u^n))_{S\backslash S^n}\|_2\nonumber\\
&&\le\|(\u^n+\mu\bPhi^{\! *} (\y-\bPhi\u^n))_{S^n\backslash S}\|_2. \label{eq:stp-61}
\end{eqnarray}
For the right-hand side of (\ref{eq:stp-61}), noticing that $(\mathbf{x}_S)_{S^n\backslash S}=\mathbf{0}$, we have
\begin{eqnarray}
&&\|(\u^n+\mu\bPhi^{\! *} (\y-\bPhi\u^n))_{S^n\backslash S}\|_2\nonumber\\
&=&\|(\u^n+\mu\bPhi^{\! *}\bPhi (\x_{S}-\u^n)+\mu\bPhi^{\! *} \e^{\prime})_{S^n\backslash S}\|_2 \nonumber\\
&\le&\!\!\!\!\!\!\|((\mu\bPhi^{\! *}\bPhi-\I)(\x_{S}-\u^n))_{S^n\backslash S}\|_2+\|(\mu\bPhi^{\! *} \e^{\prime})_{S^n\backslash S}\|_2. \label{eq:stp-62}
\end{eqnarray}
For the left-hand side of (\ref{eq:stp-61}), noticing that $(\mathbf{x}_S)_{S\backslash S^{n}}=(\mathbf{x}_S)_{\overline{ S^n}}$, we have
\begin{eqnarray}
&&\|(\u^n+\mu\bPhi^{\! *} (\y-\bPhi\u^n))_{S\backslash S^n}\|_2\nonumber\\
&=&\|(\u^n+\mu\bPhi^{\! *}\bPhi (\x_{S}-\u^n)+\mu\bPhi^{\! *} \e^{\prime}-\x_{S}+\x_{S})_{S\backslash S^n}\|_2\nonumber\\
&\ge&\|(\x_{S})_{\overline{S^n}}\|_2-\|((\mu\bPhi^{\! *}\bPhi-\I)(\x_{S}-\u^n))_{S\backslash S^n}\|_2\nonumber\\
&&-\|(\mu\bPhi^{\! *} \e^{\prime})_{S\backslash S^n}\|_2.  \label{eq:stp-63}
\end{eqnarray}
Combining (\ref{eq:stp-61}), (\ref{eq:stp-62}) and (\ref{eq:stp-63}), we have
\begin{eqnarray}
&&\|(\x_S)_{\overline{S^n}}\|_2\nonumber\\
&\le&\|((\mu\bPhi^{\! *}\bPhi-\I)(\x_{S}-\u^n))_{S^n\backslash S}\|_2+\|(\mu\bPhi^{\! *} \e^{\prime})_{S^n\backslash S}\|_2 \nonumber\\
&&+\|((\mu\bPhi^{\! *}\bPhi-\I)(\x_{S}-\u^n))_{S\backslash S^n}\|_2+\|(\mu\bPhi^{\! *} \e^{\prime})_{S\backslash S^n}\|_2 \nonumber\\
&\le&\sqrt{2}\|((\mu\bPhi^{\! *}\bPhi-\I)(\x_{S}-\u^n))_{(S^n\backslash S)\cup(S\backslash S^n)}\|_2\nonumber\\
&&+\sqrt{2}\|(\mu\bPhi^{\! *}  \e^{\prime})_{(S^n\backslash S)\cup(S\backslash S^n)}\|_2 \label{eq:stp-64}\\
&\le&\sqrt{2}(|\mu-1|+\mu\delta_{3s})\|\x_{S}-\u^n\|_2\nonumber\\
&&+\sqrt{2(1+\delta_{2s})}\mu\|\e^{\prime}\|_2.
\end{eqnarray}
where the inequality $(\ref{eq:stp-64})$ is from the Cauchy-Schwartz inequality.
\end{proof}

\subsection{\label{proof1}Proof of Theorem \ref{thm:main}}
Steps 1 and 2
 are the OMP-like identification steps.
By Lemma \ref{lem:identification-sp}, in the $n$-th iteration, we have

\begin{equation}
\Vert(\mathbf{x}_S)_{\overline{\tilde{S}^{n}}}\Vert_2\le\sqrt{2}\delta_{3s}\Vert\mathbf{x}_S- \mathbf{x}^{n-1}\Vert_2+\sqrt{2(1+\delta_{2s})}\Vert\mathbf{e}^{\prime}\Vert_2.\label{eq:sp_1}
\end{equation}

Step 3 of the $n$-th iteration is a procedure of solving a least squares problem. Letting $T=\tilde{S}^{n}$ and $\mathbf{z}_p=\tilde{\mathbf{x}}^{n}$, $t=2s$, by (\ref{eq:orthogonality-rip2}) of Lemma \ref{lem:orthogonality-rip}, we have
\begin{align}
&\Vert\mathbf{x}_S-\tx^{n}\Vert_2\le\sqrt{\dfrac{1}{1-\delta_{3s}^2}}\Vert(\mathbf{x}_S)_{\overline{\tilde{S}^{n}}}\Vert_2+\dfrac{\sqrt{1+\delta_{2s}}}{1-\delta_{3s}}\Vert\mathbf{e}^{\prime}\Vert_2.\label{eq:sp_2}
\end{align}
Then combining (\ref{eq:sp_1}) and (\ref{eq:sp_2}) and magnifying $\delta_{2s}$ to $\delta_{3s}$ by Lemma \ref{lem:rip}, we have
\begin{eqnarray}
\Vert\mathbf{x}_S-\tx^{n}\Vert_2
&\le&\sqrt{\dfrac{2\delta_{3s}^2}{1-\delta_{3s}^2}}\Vert\mathbf{x}_S-\mathbf{x}^{n-1}\Vert_2\nonumber\\
&&+\dfrac{\sqrt{2(1-\delta_{3s})}+\sqrt{1+\delta_{3s}}}{1-\delta_{3s}}\Vert\mathbf{e}^{\prime}\Vert_2.\label{eq:sp_3}
\end{eqnarray}

In step 4 of the $n$-th iteration, define $S_{\nabla}:=\tilde{S}^{n}\backslash U^{n}$, where $S_{\nabla}$ contains the indices of the $s$ smallest entries in $\mathbf{\tilde{x}}^{n}$. Letting $T=\tilde{S}^{n}$ and $\mathbf{z}_p=\tx^{n}$, $t=2s$, $T_{\nabla}=S_{\nabla}$, by (\ref{eq:orthogonality-rip3}) of Lemma \ref{lem:orthogonality-rip}, we have
\begin{align}
\Vert(\mathbf{x}_S)_{S_{\nabla}}\Vert_2\le \sqrt{2}\delta_{3s}\Vert\mathbf{x}_S-\tx^{n}\Vert_2+\sqrt{2(1+\delta_{2s})}\Vert\mathbf{e}^{\prime}\Vert_2.\label{eq:sp_4}
\end{align}
Let $\tau_1=\dfrac{\sqrt{2(1-\delta_{3s})}+\sqrt{1+\delta_{3s}}}{1-\delta_{3s}}$ and $\tau_2=\sqrt{1+\delta_{3s}}$. Dividing $\overline{U^{n}}$ into two disjoint parts: $S_{\nabla}$ and $\overline{\tilde{S}^{n}}$, we have
\begin{eqnarray*}
&&\Vert(\mathbf{x}_S)_{\overline{U^{n}}}\Vert_2^2
=\Vert(\mathbf{x}_S)_{S_{\nabla}}\Vert_2^2+\Vert(\mathbf{x}_S)_{\overline{\tilde{S}^{n}}}\Vert_2^2\\
&\overset{(\ref{eq:sp_4}), (\ref{eq:sp_1})}{\le}&2(\delta_{3s}\Vert\mathbf{x}_{S}-
\tx^{n}\Vert_2
+\tau_{2}\Vert\mathbf{e}^{\prime}\Vert_2)^2\\
&&+2\left(\delta_{3s}\Vert\mathbf{x}_{S}-\mathbf{x}^{n-1}\Vert_2+\tau_{2}\Vert\mathbf{e}^{\prime}\Vert_2\right)^2\\
&\overset{(\ref{eq:sp_3})}{\le}&2\Bigg(\delta_{3s}\sqrt{\dfrac{2\delta_{3s}^2}{1-\delta_{3s}^2}}\Vert\mathbf{x}_S
-\mathbf{x}^{n-1}\Vert_2\\
&&+(\delta_{3s}\tau_1+\tau_2)\Vert\mathbf{e}^{\prime}\Vert_2\Bigg)^2\\
&&+2\left(\delta_{3s}\Vert\mathbf{x}_S-\mathbf{x}^{n-1}\Vert_2+\tau_2\Vert\mathbf{e}^{\prime}\Vert_2\right)^2
\nonumber\\
&\overset{(\ref{lem3})}{\le}&2\Bigg(\sqrt{\dfrac{2\delta_{3s}^4}{1-\delta_{3s}^2}+\delta_{3s}^2}\;\Vert\mathbf{x}_S-\mathbf{x}^{n-1}\Vert_2\\
&&+\left((\delta_{3s}\tau_1+\tau_2)+\tau_2\right)\Vert\mathbf{e}^{\prime}\Vert_2\Bigg)^2\nonumber\\
&=&2\Bigg(\sqrt{\dfrac{\delta_{3s}^2 (1+\delta_{3s}^2)}{1-\delta_{3s}^2}}\Vert\mathbf{x}_S-\mathbf{x}^{n-1}\Vert_2\\
&&+(\delta_{3s}\tau_1+2\tau_2)\Vert\mathbf{e}^{\prime}\Vert_2\Bigg)^2,\nonumber
\end{eqnarray*}
which implies that
\begin{eqnarray}
\Vert(\mathbf{x}_{S})_{\overline{U^{n}}}\Vert_2
&\le&\sqrt{\dfrac{2\delta_{3s}^2 (1+\delta_{3s}^2)}{1-\delta_{3s}^2}}\Vert\mathbf{x}_S-\mathbf{x}^{n-1}\Vert_2\\
&&+\sqrt{2}(\delta_{3s}\tau_1+2\tau_2)\Vert\mathbf{e}^{\prime}\Vert_2.\label{eq:sp_5}
\end{eqnarray}

In step 5 of the $n$-th iteration, since $\mathbf{u}^{n}$ is obtained by keeping the $s$ largest magnitude entries of $\tx^{n}$, we have
\begin{align}
&\quad\;\Vert(\mathbf{x}_S-\mathbf{u}^{n})_{U^{n}}\Vert_2\le\Vert(\mathbf{x}_S-\tx^{n})_{\tilde{S}^{n}}\Vert_2\nonumber\\
&\overset{(\ref{eq:orthogonality-rip1})}{\le}\delta_{3s}\Vert\mathbf{x}_S-\tx^{n}\Vert_2+\sqrt{1+\delta_{3s}}\Vert\mathbf{e}^{\prime}\Vert_2\nonumber\\
&\overset{(\ref{eq:sp_3})}{\le}\sqrt{\dfrac{2\delta_{3s}^4}{1-\delta_{3s}^2}}\Vert\mathbf{x}_S-\mathbf{x}^{n-1}\Vert_2\nonumber\\
&\quad+\left(\delta_{3s}\dfrac{\sqrt{2(1-\delta_{3s})}+\sqrt{1+\delta_{3s}}}{1-\delta_{3s}}+\sqrt{1+\delta_{3s}}\right)\Vert\mathbf{e}^{\prime}\Vert_2\nonumber\\
&=\sqrt{\dfrac{2\delta_{3s}^4}{1-\delta_{3s}^2}}\Vert\mathbf{x}_S-\mathbf{x}^{n-1}\Vert_2+(\delta_{3s}\tau_1+\tau_2)\Vert\mathbf{e}^{\prime}\Vert_2.\label{eq:cosamp_6}
\end{align}
Dividing supp($\mathbf{x}_S-\mathbf{u}^{n}$) into two disjoint parts: $U^{n},\overline{U^{n}}$, and noticing that $(\mathbf{x}_S-\mathbf{u}^{n})_{\overline{U^{n}}}=(\mathbf{x}_S)_{\overline{U^{n}}}$, we have
\begin{eqnarray*}
&&\Vert\mathbf{x}_S-\mathbf{u}^{n}\Vert_2^2=\Vert(\mathbf{x}_S-\mathbf{u}^{n})_{U^{n}}\Vert_2^2+\Vert(\mathbf{x}_S-\mathbf{u}^{n})_{\overline{U^{n}}}\Vert_2^2\nonumber\\
&=&\Vert(\mathbf{x}_S-\mathbf{u}^{n})_{U^{n}}\Vert_2^2+\Vert(\mathbf{x}_S)_{\overline{U^{n}}}\Vert_2^2\nonumber\\
&\overset{(\ref{eq:cosamp_6}),(\ref{eq:sp_5})}{\le}&\Bigg(\sqrt{\dfrac{2\delta_{3s}^4}{1-\delta_{3s}^2}}\Vert\mathbf{x}_S-\mathbf{x}^{n-1}\Vert_2+(\delta_{3s}\tau_1+\tau_2)\Vert\mathbf{e}^{\prime}\Vert_2\Bigg)^2\\
&&+\Bigg(\sqrt{\dfrac{2\delta_{3s}^2 (1+\delta_{3s}^2)}{1-\delta_{3s}^2}}\Vert\mathbf{x}_S-\mathbf{x}^{n-1}\Vert_2\nonumber\\
&&\qquad+\sqrt{2}(\delta_{3s}\tau_1 +2\tau_2)\Vert\mathbf{e}^{\prime}\Vert_2\Bigg)^2\nonumber\\
&\overset{(\ref{lem3})}{\le}&\Bigg(\sqrt{\dfrac{2\delta_{3s}^2(1+2\delta_{3s}^2)}{1-\delta_{3s}^2}}\Vert\mathbf{x}_S-\mathbf{x}^{n-1}\Vert_2\nonumber\\
&&+((\sqrt{2}+1)\delta_{3s}\tau_1+(2\sqrt{2}+1)\tau_2)\Vert\mathbf{e}^{\prime}\Vert_2\Bigg)^2,\label{eq:cosamp_7}
\end{eqnarray*}
or
\begin{eqnarray}
\Vert\mathbf{x}_S-\mathbf{u}^{n}\Vert_2
&\le&\sqrt{\dfrac{2\delta_{3s}^2(1+2\delta_{3s}^2)}{1-\delta_{3s}^2}}\Vert\mathbf{x}_S-\mathbf{x}^{n-1}\Vert_2\nonumber\\
&&\!\!\!+((\sqrt{2}+1)\delta_{3s}\tau_1
+(2\sqrt{2}+1)\tau_2)\Vert\mathbf{e}^{\prime}\Vert_2.\label{eq:stp1}
\end{eqnarray}

Step 6 of STP is the IHT-like identification step.
By Lemma \ref{lem:iht-identification}, in the $n$-th iteration, we have
\begin{eqnarray}
\|(\x_S)_{\overline{S^n}}\|_2&\le&\sqrt{2}(|\mu-1|+\mu\delta_{3s})\|\x_{S}-\u^n\|_2\nonumber\\
&&+\sqrt{2(1+\delta_{2s})}\mu\|\e^{\prime}\|_2.\label{eq:stp2}
\end{eqnarray}

Step 7 of the $n$-th iteration is a procedure of solving a least squares problem. Letting $T=S^{n}$ and $\mathbf{z}_p=\x^{n}$, $t=s$, by (\ref{eq:orthogonality-rip2}) of Lemma \ref{lem:orthogonality-rip}, we have
\begin{equation}
\|\x_S-\x^n\|_2\le\sqrt{\frac{1}{1-\delta_{2s}^2}}\|(\x_S)_{\overline{S^n}}\|_2+\frac{\sqrt{1+\delta_{s}}}{1-\delta_{2s}}\|\e^{\prime}\|_2.\label{eq:stp3}
\end{equation}

Combing \eqref{eq:stp1},\eqref{eq:stp2} and \eqref{eq:stp3},  and magnifying $\delta_{s},\delta_{2s}$ to $\delta_{3s}$ by Lemma \ref{lem:rip}, we have
\begin{eqnarray}
\|\x_S-\x^n\|_2\le\rho\|\x_S-\x^{n-1}\|_2+(1-\rho)\tau\|\e^{\prime}\|_2.
\end{eqnarray}
where $\rho$ and $\tau$ is respectively referred to (\ref{eq:rho}) and (\ref{eq:tau}). Hence, (\ref{eq:main}) follows by recursively using the above inequality when $\rho<1$.

When $\mu<1$, $\rho<1$ is equivalent to $\dfrac{1}{1-\delta_{3s}}-\dfrac{1+\delta_{3s}}{2\delta_{3s}\sqrt{1+2\delta_{3s}}}<\mu<1$; when $\mu>1$, $\rho<1$ is equivalent to $1<
 \mu<\dfrac{1}{1+\delta_{3s}}+\dfrac{1-\delta_{3s}}{2\delta_{3s}\sqrt{1+2\delta_{3s}^2}}$; when $\mu=1$, $\rho<1$ is equivalent to $\delta_{3s}<0.5340$. Thus
 we finish the proof.

\subsection{\label{proof2}Proof of Theorem \ref{thm:iter-num}}
On the one hand, considering the similarity with HTP, our proof for the number of iterations of STP mainly follows the proof of \cite[corollary 3.6]{foucart2011hard}. According to steps 6 and 7 of STP, if $S^n=S$, then we will get the exact solution by solving the least squares problem in step 7. For all $i\in S$ and $j\in \overline{S}$, a sufficient condition to guarantee $S^n=S$ in step 6 is
\begin{eqnarray}
|(\u^n+\mu\bPhi^{\! *}(\y-\bPhi\u^n))_i|&>&\!\!\!\!||(\u^n+\mu\bPhi^{\! *}(\y-\bPhi\u^n))_j|\nonumber\\
&=&\!\!\!\!|((\mu\bPhi^{\! *}\bPhi-\I)(\x-\u^n))_j|. \label{eq:iter1}
\end{eqnarray}
We observe that
\begin{eqnarray}
&&|(\u^n+\mu\bPhi^{\! *}(\y-\bPhi\u^n))_{i}|\nonumber\\
&=&|(\u^n+\mu\bPhi^{\! *}(\y-\bPhi\u^n)-\x+\x)_{i}|\nonumber\\
&\ge&\xi-|((\mu\bPhi^{\! *}\bPhi-\I)(\x-\u^n))_{i}|. \label{eq:iter2}
\end{eqnarray}
Then, we show that
\begin{eqnarray}
&&|((\mu\bPhi^{\! *}\bPhi-\I)(\x-\u^n))_{i}|+|((\mu\bPhi^{\! *}\bPhi-\I)(\x-\u^n))_{j}| \nonumber\\
        &\le&\sqrt{2}((\mu\bPhi^{\! *}\bPhi-\I)(\x-\u^n))_{i\cup j} \nonumber\\
        &\le&\sqrt{2}(|\mu-1|+\mu\delta_{2s+2})\|\x-\u^n\|_2 \nonumber\\
        &\le&\sqrt{2}(|\mu-1|+\mu\delta_{3s})\|\x-\u^n\|_2 \nonumber\\
        &\le&\sqrt{2}(|\mu-1|+\mu\delta_{3s})\sqrt{\frac{2\delta_{3s}^2(1+2\delta_{3s}^2)}{1-\delta_{3s}^2}}\|\x-\x^{n-1}\|_2\nonumber\\
        &\le&\sqrt{2}(|\mu-1|+\mu\delta_{3s})\sqrt{\frac{2\delta_{3s}^2(1+2\delta_{3s}^2)}{1-\delta_{3s}^2}}\rho^{n-1}\|\x\|_2\nonumber\\
        &=&\sqrt{(1-\delta_{3s}^2)}\rho^n\|\x\|_2\nonumber\\
        &\le&\rho^n\|\x\|_2.
\end{eqnarray}
So \eqref{eq:iter1} is satisfied as soon as
\begin{eqnarray}
\xi\ge\rho^n\|\x\|_2.
\end{eqnarray}
Then the smallest $n$ is
\be
\left\lceil \frac{\ln\|\x\|_2/\xi}{\ln {1}/{\rho}} \right\rceil.\label{eq:iter7}
\ee

On the other hand, assuming that $x$ is  $s$-sparse and setting $\e^{\prime}=\0$, combing \eqref{eq:stp1} and \eqref{eq:stp2}, we have
\be
\|\x_{\overline{S^n}}\|_2\le2(|\mu-1|+\mu\delta_{3s})\delta_{3s}\sqrt{\frac{1+2\delta_{3s}^2}{1-\delta_{3s}^2}}\|\x-\x^{n-1}\|_2. \label{eq:iter3}
\ee
Setting $\e^{\prime}=\0$ and substituting $n-1$ for $n$ in \eqref{eq:stp3}, one has
\be
\|\x-\x^{n-1}\|_2\le\sqrt{\frac{1}{1-\delta_{2s}^2}}\|\x_{\overline{S^{n-1}}}\|_2.\label{eq:iter4}
\ee
Combing \eqref{eq:iter3} and \eqref{eq:iter4} and magnifying $\delta_{2s}$ to $\delta_{3s}$, we have
\be
\|\x_{\overline{S^n}}\|_2\le\rho\|\x_{\overline{S^{n-1}}}\|_2.\label{eq:iter5}
\ee
where $\rho$ referred to \eqref{eq:rho}.

\eqref{eq:iter5} has the same form with \cite[Theorem 2 (6)]{dai2009subspace}, but different geometry rate $\rho$, so \cite[Theorem 8]{dai2009subspace} is also suitable for STP by using the new $\rho$ in \eqref{eq:rho} instead of the corresponding one ``$c_K$'' in \cite{dai2009subspace}. So any $s$-sparse vector $\x\in\bbR^{N}$ is reconstructed by STP with $\y=\bPhi\x$ in at most
\be
\left\lceil \frac{1.5s}{\ln1/\rho}\right\rceil. \label{eq:iter6}
\ee
Combing \eqref{eq:iter7} and \eqref{eq:iter6}, then we get Theorem \ref{thm:iter-num}.
\bibliographystyle{IEEETran}
\bibliography{Bib}

\begin{thebibliography}{10}
\providecommand{\url}[1]{#1}
\csname url@samestyle\endcsname
\providecommand{\newblock}{\relax}
\providecommand{\bibinfo}[2]{#2}
\providecommand{\BIBentrySTDinterwordspacing}{\spaceskip=0pt\relax}
\providecommand{\BIBentryALTinterwordstretchfactor}{4}
\providecommand{\BIBentryALTinterwordspacing}{\spaceskip=\fontdimen2\font plus
\BIBentryALTinterwordstretchfactor\fontdimen3\font minus
  \fontdimen4\font\relax}
\providecommand{\BIBforeignlanguage}[2]{{%
\expandafter\ifx\csname l@#1\endcsname\relax
\typeout{** WARNING: IEEEtran.bst: No hyphenation pattern has been}%
\typeout{** loaded for the language `#1'. Using the pattern for}%
\typeout{** the default language instead.}%
\else
\language=\csname l@#1\endcsname
\fi
#2}}
\providecommand{\BIBdecl}{\relax}
\BIBdecl

\bibitem{donoho2006compressed}
D.~L. Donoho, ``Compressed sensing,'' \emph{IEEE Transactions on Information
  Theory}, vol.~52, no.~4, pp. 1289--1306, 2006.

\bibitem{candes2005decoding}
E.~J. Cand{\`e}s and T.~Tao, ``Decoding by linear programming,'' \emph{IEEE
  Transactions on Information Theory}, vol.~51, no.~12, pp. 4203--4215, 2005.

\bibitem{candes2006robust}
E.~J. Cand{\`e}s, J.~Romberg, and T.~Tao, ``Robust uncertainty principles:
  Exact signal reconstruction from highly incomplete frequency information,''
  \emph{IEEE Transactions on Information Theory}, vol.~52, no.~2, pp. 489--509,
  2006.

\bibitem{nesterov1994interior}
Y.~Nesterov, A.~Nemirovskii, and Y.~Ye, \emph{Interior-point polynomial
  algorithms in convex programming}.\hskip 1em plus 0.5em minus 0.4em\relax
  Philadelphia, PA: SIAM, 1994.

\bibitem{pati1993orthogonal}
Y.~C. Pati, R.~Rezaiifar, and P.~Krishnaprasad, ``Orthogonal matching pursuit:
  Recursive function approximation with applications to wavelet
  decomposition,'' in \emph{Proc. 27th Annu. Asilomar Conf. Signals, Systems,
  and Computers}, vol.~1, Pacific Grove, CA, Nov. 1993, pp. 40--44.

\bibitem{needell2010signal}
D.~Needell and R.~Vershynin, ``Signal recovery from incomplete and inaccurate
  measurements via regularized orthogonal matching pursuit,'' \emph{IEEE
  Journal of Selected Topics in Signal Processing}, vol.~4, no.~2, pp.
  310--316, 2010.

\bibitem{needell2009cosamp}
D.~Needell and J.~A. Tropp, ``{CoSaMP: Iterative signal recovery from
  incomplete and inaccurate samples},'' \emph{Applied and Computational
  Harmonic Analysis}, vol.~26, no.~3, pp. 301--321, 2009.

\bibitem{dai2009subspace}
W.~Dai and O.~Milenkovic, ``Subspace pursuit for compressive sensing signal
  reconstruction,'' \emph{IEEE Transactions on Information Theory}, vol.~55,
  no.~5, pp. 2230--2249, 2009.

\bibitem{shim2012generalized}
B.~Shim, J.~Wang, and S.~Kwon, ``Generalized orthogonal matching pursuit,''
  \emph{IEEE Transactions on Signal Processing}, vol.~60, pp. 6202--6216, 2012.

\bibitem{liu2012orthogonal}
E.~Liu and V.~N. Temlyakov, ``The orthogonal super greedy algorithm and
  applications in compressed sensing,'' \emph{IEEE Transactions on Information
  Theory}, vol.~58, no.~4, pp. 2040--2047, 2012.

\bibitem{do2008sparsity}
T.~T. Do, L.~Gan, N.~Nguyen, and T.~D. Tran, ``Sparsity adaptive matching
  pursuit algorithm for practical compressed sensing,'' in \emph{2008 42nd
  Asilomar Conference on Signals, Systems and Computers}, Pacific Grove, CA,
  2008, pp. 581--587.

\bibitem{karahanoglu2013compressed}
N.~B. Karahanoglu and H.~Erdogan, ``Compressed sensing signal recovery via
  forward--backward pursuit,'' \emph{Digital Signal Processing}, vol.~23,
  no.~5, pp. 1539--1548, 2013.

\bibitem{Blumensath2009}
T.~Blumensath and M.~E. Davies, ``Iterative hard thresholding for compressed
  sensing,'' \emph{Applied and Computational Harmonic Analysis}, vol.~27,
  no.~3, pp. 265--274, 2009.

\bibitem{garg2009gradient}
R.~Garg and R.~Khandekar, ``Gradient descent with sparsification: an iterative
  algorithm for sparse recovery with restricted isometry property,'' in
  \emph{Proceedings of the 26th Annual International Conference on Machine
  Learning}.\hskip 1em plus 0.5em minus 0.4em\relax ACM, 2009, pp. 337--344.

\bibitem{foucart2011hard}
S.~Foucart, ``Hard thresholding pursuit: an algorithm for compressive
  sensing,'' \emph{SIAM Journal on Numerical Analysis}, vol.~49, no.~6, pp.
  2543--2563, 2011.

\bibitem{blumensath2010normalized}
T.~Blumensath and M.~E. Davies, ``Normalized iterative hard thresholding:
  Guaranteed stability and performance,'' \emph{IEEE Journal of Selected Topics
  in Signal Processing}, vol.~4, no.~2, pp. 298--309, 2010.

\bibitem{mo2012remark}
Q.~Mo and Y.~Shen, ``A remark on the restricted isometry property in orthogonal
  matching pursuit,'' \emph{IEEE Transactions on Information Theory}, vol.~58,
  no.~6, pp. 3654--3656, 2012.

\bibitem{wang2012recovery}
J.~Wang and B.~Shim, ``On the recovery limit of sparse signals using orthogonal
  matching pursuit,'' \emph{IEEE Transactions on Signal Processing}, vol.~60,
  no.~9, pp. 4973--4976, 2012.

\bibitem{song2013improved}
C.-B. Song, S.-T. Xia, and X.-j. Liu, ``{Improved Analyses for SP and CoSaMP
  Algorithms in Terms of Restricted Isometry Constants},'' \emph{arXiv preprint
  arXiv:1309.6073}, 2013.

\bibitem{blanchard2014greedy}
J.~Blanchard, M.~Cermak, D.~Hanle, and Y.~Jing, ``Greedy algorithms for joint
  sparse recovery,'' \emph{Signal Processing, IEEE Transactions on}, vol.~62,
  no.~7, pp. 1694--1704, April 2014.

\bibitem{maleki2010optimally}
A.~Maleki and D.~L. Donoho, ``Optimally tuned iterative reconstruction
  algorithms for compressed sensing,'' \emph{IEEE Journal of Selected Topics in
  Signal Processing}, vol.~4, no.~2, pp. 330--341, 2010.

\bibitem{blanchard2011phase}
J.~D. Blanchard, C.~Cartis, J.~Tanner, and A.~Thompson, ``Phase transitions for
  greedy sparse approximation algorithms,'' \emph{Applied and Computational
  Harmonic Analysis}, vol.~30, no.~2, pp. 188--203, 2011.

\bibitem{candes2005error}
E.~Cand{\`e}s, M.~Rudelson, T.~Tao, and R.~Vershynin, ``Error correction via
  linear programming,'' in \emph{Proc. IEEE Symp. Foundations of Computer
  Science (FOCS)}, Pittsburgh, PA, Oct. 2005, pp. 668--681.

\bibitem{davies2009restricted}
M.~E. Davies, R.~Gribonval \emph{et~al.}, ``Restricted isometry constants where
  $\ell_p$ sparse recovery can fail for $0< p\le 1$,'' \emph{IEEE Transactions
  on Information Theory}, vol.~55, no.~5, pp. 2203--2214, 2009.

\bibitem{tropp2010computational}
J.~A. Tropp and S.~J. Wright, ``Computational methods for sparse solution of
  linear inverse problems,'' \emph{Proceedings of the IEEE}, vol.~98, no.~6,
  pp. 948--958, 2010.

\end{thebibliography}
\end{document}